\numberwithin{table}{section}
\numberwithin{equation}{section}
\theoremstyle{plain}
\newtheorem{theorem}{Theorem}[section]
\newtheorem{proposition}[theorem]{Proposition}
\newtheorem{definition}[theorem]{Definition}
\newtheorem{lemma}[theorem]{Lemma}
\newtheorem{example}[theorem]{Example}
\newtheorem{corollary}[theorem]{Corollary}
\newtheorem{remark}[theorem]{Remark}
\newcommand{\matindex}[1]{\mbox{\scriptsize#1}}
\newcommand{\NN}{\mathcal{N}} 
\author[1,*]{ \textbf{Bryan S. Hernandez}}
\author[1]{ \textbf{Juan Paolo C. Santos}}
\author[2]{ \textbf{Patrick Vincent N. Lubenia}}
\author[2,3,4]{\textbf{Eduardo R. Mendoza}}
\affil[1]{\small \textit{Institute of Mathematics, University of the Philippines Diliman, Quezon City 1101, Philippines}}
\affil[2]{\small \textit{Center for Natural Science and Environmental Research, De la Salle University, Taft Avenue, Manila 0922, Philippines}}
\affil[3]{\small \textit{Department of Mathematics and Statistics, De La Salle University, Taft Avenue, Manila 0922, Philippines}}
\affil[4]{\small \textit{Max Planck Institute of Biochemistry, Martinsried near Munich 82152, Germany}}
\affil[*]{Email address: \texttt{bshernandez@up.edu.ph}}
\title{\textbf{The finest decompositions' architecture of a reaction network
}}
\date{}
\begin{document}
\maketitle
\begin{abstract} 
Biochemical and environmental modeling typically relies on reaction networks to represent complex transformations. While the Linkage Class Decomposition (LCD) partitions networks based on visual standard connectivity, it often misaligns with the algebraic properties governing long-term dynamics. This work establishes the Finest Decompositions’ Architecture (FDA) framework by analyzing hierarchical relationships between the LCD and two algebraic structures: the Finest Independent Decomposition (FID) and the Finest Incidence-Independent Decomposition (FIID).
These algebraic decompositions serve as the respective building blocks for characterizing general equilibria and complex-balanced equilibria of a reaction network.
Under the partial order of ``coarsens to,'' we categorize reaction networks into six architectures, distinguishing three subclasses of Independent Linkage Classes (ILC) from three subclasses of Dependent Linkage Classes (DLC). To facilitate the classification, we introduce the Deficiency Difference ($\Delta$), measuring the discrepancy between total and subnetwork deficiencies, and the Common Complexes Cardinality $|\mathscr{CC}|$ of the FID. Results show that $\Delta$ uniquely identifies all the ILC classes and one DLC subclass, while $|\mathscr{CC}|$ distinguishes the remaining DLC subclasses. We then provide numerous examples and identify interesting subclasses in the FDA classes. In particular, we prove that the class of deficiency zero networks is a proper subset of the FDE (Finest Decompositions Equality) class, where $\text{FID} = \text{FIID}$. A number of results on mass action systems such as the Deficiency One Theorem as well as on power law systems essentially rely on the ILC property of the underlying networks. These suggest that the FDA classification of ILC and DLC networks signify a certain alignment of both structural and kinetic attributes. This work opens up direction for the study of the structure and equilibria analysis of reaction networks across diverse decomposition architectures.

    {\bf{Keywords:}} {chemical reaction network, decomposition architecture, biochemical system, environmental system, independent decomposition, incidence-independent decomposition, linkage class decomposition, matrix decomposition}
 
\end{abstract}

\thispagestyle{empty}

\section{Introduction}
\label{intro}

The study of complex systems, ranging from the regulation of blood sugar in the human body to the global transport of carbon in the environment of the Earth, necessitates robust mathematical modeling. These systems are frequently represented as Chemical Reaction Networks (CRNs)—essential maps detailing how substances (or species) interact and transform through chemical reactions. 
The standard reaction network diagram is partitioned into connected components associated with its linkage classes. In other words, the visual representation of a reaction network is fundamentally defined by its \textbf{Linkage Class Decomposition (LCD)}.
However, there is a frequent misalignment between a network's visual connectivity (LCD) and the underlying algebraic decompositions that govern its long-term dynamical behavior.

This paper addresses this gap by studying the relationships between a reaction network’s three finest decompositions:
\begin{itemize}
    \item[1.] \textbf{Linkage Class Decomposition (LCD)},
    \item[2.] \textbf{Finest Independent Decomposition (FID):} The unique independent decomposition with no independent refinement, which is critical for finding general equilibria,
    \item[3.] \textbf{Finest Incidence-Independent Decomposition (FIID):} The unique incidence-independent decomposition with no incidence-independent refinement, which is essential for finding complex-balanced equilibria.
\end{itemize}

We analyze these relationships under the partial order of ``coarsens to'' on the set of decompositions, establishing a hierarchy of coarseness. These relationships lead to a comprehensive classification system, which we call the \textbf{Finest Decompositions' Architecture (FDA)}. The FDA framework classifies all reaction networks into six distinct classes, primarily separating networks with \textbf{Independent Linkage Classes (ILC)}—where the visual structure aligns with algebraic independence—from those with \textbf{Dependent Linkage Classes (DLC)}, where visual connectivity hides underlying dependencies. Both Independent Linkage Classes (ILC) and Dependent Linkage Classes (DLC) are each categorized into three distinct subclasses, namely ILC-1, ILC-2, ILC-3 and DLC-1, DLC-2, DLC-3. These six categories constitute the complete set of Finest Decompositions' Architectures (FDA) for a reaction network.

To better characterize the FDA classes, we introduce two new network properties:
\begin{itemize}
    \item[1.] \textbf{Deficiency Difference ($\Delta$):} For any decomposition $D$, we set $\Delta_D = |\delta - \delta_D|$, where $\delta$ is the network deficiency and $\delta_D$ is the sum of the subnetwork deficiencies. The deficiency difference is expressed as the vector $(\Delta_{LCD}, \Delta_{FID}, \Delta_{FIID})$.
    \item[2.] \textbf{Common Complexes Cardinality ($|\mathscr{CC}|$):} This represents the number of common complexes in the FID ($|\mathscr{CC}| := CC_{FID}$), which plays a decisive role in identifying and characterizing DLC architectures.
\end{itemize}

As a main result, we show that the range of values of $\Delta$ uniquely determines the ILC classes and DLC-3. Furthermore, while DLC-1 and DLC-2 networks share the same $\Delta$ value range, they differ in their $|\mathscr{CC}|$-value, being zero for DLC-1 and positive for DLC-2. This characterization provides a convenient way of computing a network’s FDA class. We provide numerous examples and identify interesting subclasses within these architectures.

Another main result demonstrates that deficiency zero networks form a proper subset of the \textbf{FDE (Finest Decompositions Equality)} subclass of ILC-1, consisting of networks where $FID = FIID$ (or equivalently $\Delta = (0, 0, 0)$). This leads to a new characterization of deficiency zero networks. Additionally, we observe that many biochemical models belong to the \textbf{FDC (Finest Decomposition Coarsening)} subset—comprised of ILC-1, DLC-1, and DLC-2—which are those networks where the FIID is a refinement of (or coarsens to) the FID.

Because fundamental results in Chemical Reaction Network Theory rely on the ILC property, the FDA framework provides an essential bridge between a network's visual structure and its kinetic attributes. In particular, the Deficiency One Theorem and various power-law system analyses rely on the ILC property. Furthermore, in mass action systems with the ILC property, the existence of steady states for the entire network is equivalent to the existence of steady states within its individual linkage-class subnetworks.

The remainder of this paper is structured systematically to explore the FDA. Section \ref{fundamentals:CRN} establishes the mathematical foundation by reviewing the fundamentals of chemical reaction networks and kinetic systems. Section \ref{section:FDA} introduces the FDA framework, defining the six architecture classes based on the coarsening relationships between the network's three finest decompositions: LCD, FID, and FIID.
Section \ref{section:FDA:ILC} discusses the finest decompositions' architecture subclasses of ILC networks. On the other hand, Section \ref{kinetic properties:ILC} provides kinetic properties of ILC systems.
Section \ref{section:FDA:DLC} discusses FDA classes of DLC networks. Finally, Section \ref{summary:conclusion:recommendation} provides summary, conclusion, and recommendations.

\section{Fundamentals of reaction networks and kinetic systems}
\label{fundamentals:CRN}

In this section, we discuss fundamental concepts in reaction networks and kinetic systems, drawing heavily from the book \cite{feinberg:book} and lectures \cite{feinberg:lecture} of M. Feinberg.

\subsection{Chemical reaction networks}

\begin{definition}
A {\underline{\emph{chemical reaction network}}} 
is a triple $(\mathcal{S}, \mathcal{C}, \mathcal{R})$ of nonempty finite sets where
    \begin{itemize}
        \item[a.] $\mathcal{S} = \left\{A_i:i=1,2,\ldots,m \right\}$ is the set of {\underline{\emph{species}}},
        \item[b.] $\mathcal{C} = \{C_i:i=1,2,\ldots,n\}$ is the set of {\underline{\emph{complexes}}}, which are non-negative linear combinations of the species, and
        \item[c.] $\mathcal{R} = \{R_i:i=1,2,\ldots,r\} \subset \mathcal{C} \times \mathcal{C}$ is the set of {\underline{\emph{reactions}}}
    \end{itemize}
such that
    \begin{itemize}
        \item[i.] $\left( {{C_i},{C_i}} \right) \notin \mathcal{R}$ for each $C_i \in \mathcal{C}$, and
        \item[ii.] for each $C_i \in \mathcal{C}$, there exists $C_j \in \mathcal{C}$ such that $\left( {{C_i},{C_j}} \right) \in \mathcal{R}$ or $\left( {{C_j},{C_i}} \right) \in \mathcal{R}$.
    \end{itemize}
\end{definition}

We typically denote a reaction 
$(y,y')$ as $y \to y'$, 
where $y$ is the {\underline{\emph{reactant}} \underline{\emph{complex}}} while $y'$ is the {\underline{\emph{product complex}}}. The associated {\underline{\emph{reaction vector}}} is defined by the difference $y'-y$. The linear subspace $S$ of $\mathbb{R}^m$ is spanned by all the reaction vectors in the network and is known as the {\underline{\emph{stoichiometric subspace}}} of the network, i.e., $S = \mathrm{span}\{y' - y \in \mathbb{R}^m \mid y \rightarrow y' \in \mathcal{R}\}$.

\begin{definition}
The {\underline{\emph{molecularity matrix}}}, denoted by $Y$, is an $m\times n$ matrix such that $Y_{ij}$ is the stoichiometric coefficient of species $A_i$ in complex $C_j$.
The {\underline{\emph{incidence matrix}}}, denoted by $I_a$, is an $n\times r$ matrix such that 
$${\left( {{I_a}} \right)_{ij}} = \left\{ \begin{array}{rl}
 - 1&{\rm{ if \ }}{C_i}{\rm{ \ is \ the\ reactant \ complex \ of \ reaction \ }}{R_j},\\
 1&{\rm{  if \ }}{C_i}{\rm{ \ is \ the\ product \ complex \ of \ reaction \ }}{R_j},\\
0&{\rm{    otherwise}}.
\end{array} \right.$$
The {\underline{\emph{stoichiometric matrix}}}, denoted by $N$, is the $m\times r$ matrix 
$N=YI_a$.
\end{definition}

Alternatively, the stoichiometric matrix of a CRN has columns representing the coefficients of species in each corresponding reaction vector. Thus, the dimension of the stoichiometric subspace of a network is precisely the rank of its stoichiometric matrix.

Complexes can be treated as vertices and reactions as edges, viewing CRNs as directed graphs. If there is a path between two complexes $C_i$ and $C_j$, they are said to be \underline{\emph{connected}}. If there is a directed path from $C_i$ to $C_j$ and vice versa, they are said to be \underline{\emph{strongly connected}}. A subgraph where any two vertices are (strongly) connected forms a \underline{\emph{(strongly) connected component}}. These (strong) connected components precisely correspond to the \underline{\emph{(strong) linkage classes}} of a CRN.
A maximal strongly connected subgraph where there are no edges from a complex in the subgraph to a complex outside the subgraph is said to be a \underline{\emph{terminal strong linkage classes}}.
Let $\ell$ denote the number of linkage classes and $sl$ the number of strong linkage classes. The CRN is said to be \underline{\emph{weakly reversible}} if $sl=\ell$. Alternatively, a CRN is weakly reversible if each of its reactions belongs to a directed cycle.
Furthermore, let $n_r$ be the number of reactant complexes of a CRN. A CRN is {\emph{cycle terminal}} if and only if $n = n_r$.

\begin{definition}
The \underline{\emph{deficiency}} of a CRN is given by the formula $\delta:=n-\ell-s$ where $n$ is the number of complexes, $\ell$ is the number of linkage classes, and $s:={\rm{dim \ }} S$.
\end{definition}

Alternative formulas for the deficiency of a network where each linkage class contains only one terminal strong linkage class from Proposition 2.10 of \cite{boros:thesis} are given by
\begin{equation}
\label{alternative:definition}
\delta = \dim \ker S - \dim \ker I_a= \dim(\ker Y \cap {\rm Im \ } I_a).   
\end{equation}

\subsection{Kinetic systems}

\begin{definition}
A \underline{\emph{kinetics}} for a reaction network $\mathcal{N}=(\mathcal{S}, \mathcal{C}, \mathcal{R})$ is an assignment to
each reaction $y \to y' \in \mathcal{R}$ of a continuously differentiable {rate function} $K_{y\to y'}: \mathbb{R}^m_{{\geq} 0} \to \mathbb{R}_{\ge 0}$ such that this positivity condition holds:
$K_{y\to y'}(c) > 0$ if and only if ${\rm{supp \ }} y \subset {\rm{supp \ }} c$, where ${\rm{supp \ }} y$ refers to the support of the vector $y$ (the set of species with nonzero coefficient in $y$).
The pair $\left(\mathcal{N},K\right)$ is called a \underline{\emph{(chemical) kinetic system}}.
\end{definition}

 \begin{definition}
	A \underline{\emph{power law kinetics}} is a kinetics of the form $${K_i}\left( x \right) = {k_i}\prod\limits_j {{x_j}^{{F_{ij}}}}  := {k_i}{{x^{{F_{i}}}}} $$ for each reaction $i =1,\ldots,r$ where ${k_i} \in {\mathbb{R}_{ > 0}}$ and ${F_{ij}} \in {\mathbb{R}}$. The $r \times m$ matrix $F=\left[ F_{ij} \right]$ is called the \underline{\emph{kinetic order matrix}}, and $k \in \mathbb{R}^r$ is called the \underline{\emph{rate vector}}.
	\label{def:power:law}
\end{definition}

If the kinetic order matrix is the transpose of the molecularity matrix, then the kinetics is \underline{\emph{mass action kinetics (MAK)}}.

\begin{definition}
A power law system has \underline{\emph{reactant-determined kinetics}} (i.e., of type PL-RDK) if for any two reactions indexed by $i$ and $j$ with identical reactant complexes, the corresponding rows of kinetic orders in $F$ are also identical, i.e., ${F_{ik}} = {F_{jk}}$ for $k = 1,2,...,m$. It has \underline{\emph{non-reactant-determined kinetics}} (i.e., of type PL-NDK) if there exist two reactions with the same reactant complexes whose corresponding rows in $F$ are not identical.
\end{definition}

Furthermore, due to S. M\"uller and G. Regensburger \cite{MURE2014}, we present the following definitions:

\begin{definition}
	The $m \times n$ matrix $\widetilde{Y}$ is defined as
	$${\left( {\widetilde{Y}} \right)_{ij}} = \left\{ \begin{array}{l}
		{F_{ki}}{\text{  \ \ \ \ \ \ \  if \ $j$ {\text{is the reactant complex in reaction }}$k$}}\\
		{0} {\text{  \ \ \ \ \ \ \  \ \   otherwise \ }}
	\end{array} \right..$$
\end{definition}

\begin{definition}
	The $m \times n_r$ T-matrix is the truncated $\widetilde{Y}$ where the non-reactant columns are deleted.
\end{definition}

\begin{definition}
	The block matrix $\widehat{T} \in \mathbb{R}^{(m+l) \times n_r}$ is defined as
	$\widehat{T} = \left[ {\begin{array}{*{20}{c}}
			T\\
			{{L^\top}}
	\end{array}} \right]$
where the $n_r \times \ell$ matrix $L = [e_1, e_2, \ldots , e_\ell]$, and $e_i$ is a characteristic vector for the linkage class $\mathcal{L}_i$.
\end{definition}

	The \underline{\emph{species formation rate function}} of $(\mathcal{N},K)$ is given by $$f\left( x \right) = \displaystyle \sum\limits_{{y} \to {y'} \in \mathcal{R}} {{K_{{y} \to {y'}}}\left( x \right)\left( {{y'} - {y}} \right)}$$ where $x$ is a vector of concentrations of the species over time.
It can also be expressed as
$f(x) = NK(x)$ where $N$ is the stoichiometric matrix and $K$ is the vector of rate functions of $\mathcal{N}$.
The system of \underline{\emph{ordinary differential equations}} (ODEs) of the kinetic system is given by $\dfrac{{dx}}{{dt}} = f\left( x \right)$.

An \underline{\emph{equilibrium}} or a \underline{\emph{steady state}} is a vector 
$c$ of species concentrations where $f(c)=0$. The \underline{\emph{set of positive equilibria}} of a kinetic system $\left(\mathcal{N},K\right)$ is 
\begin{equation*}
    E_+:={E_ + }\left(\mathcal{N},K\right)= \left\{ {x \in \mathbb{R}^m_{>0}|f\left( x \right) = 0} \right\}.
\end{equation*}

A CRN is said to admit \underline{\emph{multiple equilibria}} or is \underline{\emph{multistationary}} if there exist positive rate constants such that the ODE system admits more than one positive equilibrium within a stoichiometric class.

The \underline{\emph{set of complex balanced equilibria}} \cite{HornJackson} is given by 
	\[Z_+:={Z_ + }\left(\mathcal{N},K\right) = \left\{ {x \in \mathbb{R}_{ > 0}^m|{I_a}  K\left( x \right) = 0} \right\} \subseteq {E_ + }.\]
Hence, a positive vector $c \in \mathbb{R}^m$ is \underline{\emph{complex balanced}} if $K\left( c \right)$ is contained in the kernel of the incidence matrix, i.e., ${\text{Ker }}{I_a}$. A kinetic system is \underline{\emph{complex balanced}} if it has a complex balanced equilibrium.

\subsection{Decomposition theory of reaction networks}
A CRN can be decomposed into smaller networks called \emph{subnetworks} by partitioning its set of reactions:

\begin{definition}
A \underline{\emph{decomposition}} of $\mathcal{N}$ is a set of subnetworks $\{\mathcal{N}_1, \mathcal{N}_2,...,\mathcal{N}_k\}$ of $\mathcal{N}$ induced by a partition $\{\mathcal{R}_1, \mathcal{R}_2,...,\mathcal{R}_k\}$ of its reaction set $\mathcal{R}$. 
\end{definition}

\begin{definition}
A network decomposition $\mathcal{N} = \mathcal{N}_1 \cup \mathcal{N}_2 \cup \ldots \cup \mathcal{N}_k$  is a \underline{\emph{refinement}} of
$\mathcal{N} = {\mathcal{N}'}_1 \cup {\mathcal{N}'}_2 \cup \ldots \cup {\mathcal{N}'}_{k'}$ 
(and the latter a \underline{\emph{coarsening}} of the former) if it is induced by a refinement  
$\{\mathcal{R}_1, \mathcal{R}_2,\ldots,\mathcal{R}_k\}$
of $\{{\mathcal{R}'}_1, {\mathcal{R}'}_2, \ldots, {\mathcal{R}'}_{k'}\}$, i.e., 
each ${\mathcal{R}}_i$ is contained in an ${\mathcal{R}'}_j$ for some $j\in \left\{1,2,\ldots,k'\right\}$. 
\end{definition}

\begin{definition}
A decomposition $\mathcal{N} = \mathcal{N}_1 \cup \mathcal{N}_2 \cup \ldots \cup \mathcal{N}_k$ with $\mathcal{N}_i = (\mathcal{S}_i,\mathcal{C}_i,\mathcal{R}_i)$ for $i=1,2,\ldots,k$ is a \underline{\emph{$\mathcal{C}$-decomposition}} if for each pair of distinct $i$ and $j$, $\mathcal{C}_i$ and $\mathcal{C}_j$ are disjoint.
\end{definition}

\begin{definition}
A network decomposition 
is \underline{\emph{independent}} if the network stoichiometric subspace is a direct sum of the subnetwork stoichiometric subspaces. A decomposition is \underline{\emph{incidence-independent}} if the network incidence matrix is a direct sum of the subnetwork incidence matrices. The decomposition is \underline{\emph{bi-independent}} if it is both independent and incidence-independent.
\end{definition}

\begin{remark}
    An equivalent formulation for an incidence-independent decomposition is the following equation \cite{fari2021}:
    \begin{equation}
    \label{incidence:check}
      n-\ell = \displaystyle\sum_{i=1}^k (n_i - \ell_i).  
    \end{equation}
    On the other hand, an equivalent formulation for an independent decomposition is the following equation \cite{fari2021}:
    \begin{equation}
        s=\displaystyle\sum_{i=1}^ks_i.
    \end{equation}
\end{remark}

Lemma 1 of \cite{fortun2} shows that for an independent decomposition,
\begin{equation}
\label{independent:le}
    \delta \le \delta_1 +\delta_2 + \cdots +\delta_k.
\end{equation}
Analogously, Proposition 6 of \cite{fari2021} states that for an incidence-independent decomposition, 
\begin{equation}
\label{independent:ge}
    \delta \ge \delta_1 +\delta_2 + \cdots +\delta_k
\end{equation}
and that $\mathcal{C}$-decompositions form a subset of all incidence-independent decompositions. Hence, independent linkage classes form the primary example of a bi-independent decomposition.
Furthermore, from Equations \ref{independent:le} and \ref{independent:ge}, for a bi-independent decomposition,
\begin{equation}
    \label{independent:eq}
    \delta = \delta_1 +\delta_2 + \cdots +\delta_k.
\end{equation}

\begin{lemma} \cite{Hernandez:WRCF}
\label{lemma:two}
For any network decomposition $\mathcal{N}=\mathcal{N}_1\cup\mathcal{N}_2\cup \ldots \cup\mathcal{N}_k$, the following statements are equivalent:
\begin{enumerate}
    \item independent and $\delta=\delta_1+\delta_2+\ldots+\delta_k$,
    \item incidence independent and $\delta=\delta_1+\delta_2+\ldots+\delta_k$,
    \item bi-independent.
\end{enumerate}
\end{lemma}

Importantly, M. Feinberg established the fundamental relationship between an independent decomposition and the set of positive equilibria:

\begin{theorem}[Remark 5.4, \cite{feinberg12}]\label{feinberg decomp} Let $(\mathcal N, K)$ be a kinetic system underlying network decomposed into $\mathcal N=\mathcal N_1\cup \mathcal N_2 \cup \ldots \cup \mathcal N_k$ and $E_+(\mathcal N_i, K_i)=\{x\in \mathbb R^{m}_{>0}\;|\;N_iK_i(x)=0\}$, then 
\begin{enumerate}
    \item[i.] $\displaystyle \bigcap_{i=1}^k E_+(\mathcal N_i, K_i)\subseteq E_+(\mathcal N, K)$ and
    \item[ii.] the equality holds if the decomposition is independent.
\end{enumerate}
\end{theorem}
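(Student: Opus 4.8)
The plan is to build everything on the additivity of the species formation rate function across the reaction-set partition, and then to use the direct-sum structure provided by independence to upgrade the inclusion in (i) to the equality in (ii). First I would record the key identity: because the decomposition is induced by a partition $\{\mathcal R_1,\ldots,\mathcal R_k\}$ of $\mathcal R$, the reaction vectors (columns of $N$) and the corresponding rate functions split cleanly among the subnetworks, so that
\begin{equation*}
NK(x)=\sum_{i=1}^k N_iK_i(x)\qquad\text{for all }x\in\mathbb R^m_{>0},
\end{equation*}
where $N_iK_i(x)$ gathers exactly the contributions of the reactions in $\mathcal N_i$. Part (i) is then immediate: if $x\in\bigcap_{i=1}^k E_+(\mathcal N_i,K_i)$ then $N_iK_i(x)=0$ for every $i$, so $NK(x)=\sum_i N_iK_i(x)=0$ and $x\in E_+(\mathcal N,K)$. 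This direction uses only the partition and requires no hypothesis on the decomposition.

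For part (ii) I would assume the decomposition is independent and take $x\in E_+(\mathcal N,K)$, i.e. $\sum_{i=1}^k N_iK_i(x)=0$. The decisive observation is that each summand $N_iK_i(x)$ is a linear combination of reaction vectors of $\mathcal N_i$ and hence lies in the subnetwork stoichiometric subspace $S_i$. Independence means precisely that $S=S_1\oplus\cdots\oplus S_k$, so the representation of $0$ as a sum of one element drawn from each $S_i$ is unique and forces every term to vanish. Thus $N_iK_i(x)=0$ for each $i$, placing $x$ in every $E_+(\mathcal N_i,K_i)$ and giving the reverse inclusion, hence the desired equality.

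The one step that genuinely requires the hypothesis---and the expected sticking point---is the passage from $\sum_i N_iK_i(x)=0$ to the vanishing of each individual term. Without the direct-sum condition the summands could cancel nontrivially in $S$, so the independence of the decomposition is exactly what rules this out; everything else is a direct substitution into the additivity identity.
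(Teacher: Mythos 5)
Your proof is correct. The paper does not prove this statement itself---it imports it verbatim from Feinberg (Remark 5.4 of the cited 1987 paper)---but your argument is precisely the standard one: additivity $NK(x)=\sum_i N_iK_i(x)$ over the reaction partition gives the inclusion, and the direct-sum decomposition $S=S_1\oplus\cdots\oplus S_k$ forces each summand $N_iK_i(x)\in S_i$ to vanish individually, yielding the reverse inclusion under independence. Nothing is missing.
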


The analogue of Feinberg's decomposition result for incidence-independent decompositions and complex-balanced equilibria is given as follows:

\begin{theorem}[Theorem 4, \cite{fari2021}]\label{farinas decomp}
Let $\mathcal N=\mathcal N_1\cup \cdots \cup \mathcal N_k$ be a decomposition. Let $K$ be any kinetics, and $Z_+(\mathcal N, K)$ and $Z_+(\mathcal N_i, K_i)$ be the set of complex balanced equilibria of $\mathcal N$ and $\mathcal N_i$, respectively. Then, 
\begin{equation*}
     \displaystyle \bigcap_{i=1}^k Z_+(\mathcal N_i, K_i)\subseteq Z_+(\mathcal N, K).
\end{equation*}
If the decomposition is incidence-independent, then
\begin{enumerate}
    \item[i.] $\displaystyle \bigcap_{i=1}^k Z_+(\mathcal N_i, K_i)=Z_+(\mathcal N, K)$ and
    \item[ii.] $Z_+(\mathcal N, K)\neq \varnothing$ implies that $Z_+(\mathcal N_i, K_i)\neq \varnothing$ for $i=1, \cdots, k$. 
\end{enumerate}
\end{theorem}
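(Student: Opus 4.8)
The plan is to reduce the entire statement to one linear-algebraic identity relating the full incidence map to the subnetwork incidence maps, and then to extract equality from the direct-sum structure that incidence-independence supplies. First I would fix notation: write $\mathcal C$ for the full set of complexes and $\mathcal C_i$ for those appearing in $\mathcal N_i$, and let $\iota_i\colon\mathbb R^{\mathcal C_i}\to\mathbb R^{\mathcal C}$ be the inclusion that extends a vector on $\mathcal C_i$ by zero on the remaining complexes. Since the reaction set is partitioned as $\mathcal R=\mathcal R_1\sqcup\cdots\sqcup\mathcal R_k$ and each rate function is attached to a single reaction, the subnetwork rate vector $K_i$ is exactly the coordinate projection of $K$ onto $\mathcal R_i$. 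The key observation is that for any complex $C$ and any reaction $R\in\mathcal R_i$ the incidence entry $(I_a)_{C,R}$ coincides with $(I_{a,i})_{C,R}$, since both merely record whether $C$ is the reactant, product, or neither of $R$. Splitting the matrix-vector product over the partition then yields
\[
I_aK(x)=\sum_{i=1}^k\iota_i\bigl(I_{a,i}K_i(x)\bigr)
\]
for every $x\in\mathbb R^m_{>0}$. I expect the careful verification of this identity---especially the bookkeeping for complexes that are shared across several subnetworks---to be the main obstacle.

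Granting the identity, the general containment follows with no independence hypothesis: if $x\in\bigcap_i Z_+(\mathcal N_i,K_i)$ then every summand $I_{a,i}K_i(x)$ vanishes, whence $I_aK(x)=0$ and $x\in Z_+(\mathcal N,K)$.

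For the equality under incidence-independence I would first record that $\mathrm{Im}\,I_a=\sum_i\iota_i(\mathrm{Im}\,I_{a,i})$ always holds, because each column of $I_a$ is $\iota_i$ applied to a column of the unique $I_{a,i}$ containing that reaction. Since $\mathrm{rank}\,I_{a,i}=n_i-\ell_i$ and $\mathrm{rank}\,I_a=n-\ell$, the incidence-independence condition $n-\ell=\sum_i(n_i-\ell_i)$ from Equation \ref{incidence:check} is exactly the assertion that this sum of images is direct. Now suppose $x\in Z_+(\mathcal N,K)$, so the left-hand side of the identity is zero; each summand $\iota_i(I_{a,i}K_i(x))$ lies in $\mathrm{Im}\,I_{a,i}$, and directness forces every summand, hence every $I_{a,i}K_i(x)$, to vanish. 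Thus $x$ lies in each $Z_+(\mathcal N_i,K_i)$, which together with the first containment gives (i). Finally, (ii) is immediate from (i): any $x\in Z_+(\mathcal N,K)$ belongs to every $Z_+(\mathcal N_i,K_i)$, so none of the latter is empty.
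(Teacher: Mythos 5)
Your proof is correct, and the paper has no proof of its own to compare against: Theorem \ref{farinas decomp} is imported verbatim from \cite{fari2021}, and your argument is precisely the standard one behind that cited result, namely the incidence-matrix analogue of Feinberg's independent-decomposition argument (Theorem \ref{feinberg decomp}) --- split $I_aK(x)$ over the reaction partition via the zero-extensions $\iota_i$, note $\mathrm{rank}\, I_a = n-\ell$ and $\mathrm{rank}\, I_{a,i} = n_i-\ell_i$, and read Equation \ref{incidence:check} as saying the sum $\sum_i \iota_i(\mathrm{Im}\, I_{a,i}) = \mathrm{Im}\, I_a$ is direct, which forces each summand to vanish at a complex balanced equilibrium. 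The unconditional containment and the deduction of (ii) from (i) are likewise handled exactly as in the source, so there is nothing to flag.
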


\section{Overview of the finest decompositions' architecture (FDA)}
\label{section:FDA}

In this section, we define the finest decompositions' architecture (FDA) of a reaction network, or more precisely its FDA class. We first review concepts and results concerning a network's three finest decompositions and then define the FDA classes based on coarsening relationships between them. After a formal verification that each reaction network is assigned a unique FDA class, we discuss some algebraic aspects of our results in the context of the semigroup of subnetworks of the reaction network.

\subsection{The three finest decompositions of a CRN}

\subsubsection{Review of FID and FIID concepts and key properties}

We begin with a simple example to review independent decompositions of chemical reaction networks (CRNs), and illustrate FID and FIID in the latter part of this section. 

\begin{example}
\label{example:Anderies}
    Consider the CRN of Anderies et al. \cite{anderies2013topology,fortun2018deficiency} pre-industrial model that describes the Earth's carbon cycle among the three carbon pools: land ($A_1$), atmosphere ($A_2$), and ocean ($A_3$):
    \begin{align*}
    A_1+2A_2 &{\stackrel{R_1}\rightarrow} A_2+2A_1\\
        A_1+A_2 &{\stackrel{R_2}\rightarrow} 2A_2\\
        A_2 &\underset{R_4}{\stackrel{R_3}\rightleftharpoons} A_3. 
    \end{align*}
Suppose we partition the reaction set $\mathcal{R}=\{R_1,R_2,R_3,R_4\}$ into subsets $\{R_1,R_2\}$ and $\{R_3,R_4\}$. This induces two subnetworks $\mathcal{N}_1$ and $\mathcal{N}_2$, each of which has two reactions.

The stoichiometric matrix of the network is
    \begin{center}
    $N=\begin{blockarray}{ccccc}
        \matindex{$R_1$} & \matindex{$R_2$} & \matindex{$R_3$} & \matindex{$R_4$} \\
        \begin{block}{[cccc]c}
        1 & -1 & 0 & 0  & \matindex{$A_1$}\\
        -1 & 1 & -1 & 1 & \matindex{$A_2$}\\
        0 & 0 & 1 & -1 & \matindex{$A_3$}\\
        \end{block}
        \end{blockarray}$,
    \end{center}
while the stoichiometric matrices of the two subnetworks ($\NN_1$ and $\NN_2$) are
    $$N_1=\begin{blockarray}{ccc}
        \matindex{$R_1$} & \matindex{$R_2$}\\
        \begin{block}{[cc]c}
        1 & -1 & \matindex{$A_1$}\\
        -1 & 1 & \matindex{$A_2$}\\
        0 & 0 & \matindex{$A_3$}\\
        \end{block}
        \end{blockarray}
    {\text{and }}
    N_2=\begin{blockarray}{ccc}
         \matindex{$R_3$} & \matindex{$R_4$} \\
        \begin{block}{[cc]c}
         0 & 0 & \matindex{$A_1$}\\
         -1 & 1 & \matindex{$A_2$}\\
         1 & -1 & \matindex{$A_3$}\\
        \end{block}
        \end{blockarray}.$$
    Since rank $N=2$, rank $N_1 = 1$ and rank $N_2 = 1$, then the sum of the ranks of the stoichiometric matrices of the subnetworks is the rank of the stoichiometric matrix of the whole network, i.e., rank $N$ = rank $N_1$ $+$ rank $N_2$. Thus, the decomposition is independent.
\end{example}

\begin{theorem} \cite{had2022}
\label{FID:unique}
    Let $R$ be a finite set of vectors. $R$ has exactly one independent (incidence-independent) decomposition with no independent (incidence-independent) refinement.
\end{theorem}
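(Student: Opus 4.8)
The plan is to argue at the stated level of generality, with $R$ a finite (indexed) set of vectors spanning a space $V$, calling a partition $R=\bigsqcup_a B_a$ \emph{independent} when $\sum_a \dim\Span(B_a)=\dim V$; equivalently, writing $W_a=\Span(B_a)$, when $V=\bigoplus_a W_a$ is an internal direct sum. The incidence-independent case is the very same statement applied to the vectors attached to the reactions by the incidence matrix, so it needs no separate argument. I order partitions so that ``$P$ refines $Q$'' means $P\le Q$, and I note that the common refinement of $P$ and $Q$ (the partition whose blocks are the nonempty intersections of blocks of $P$ and $Q$) is their meet $P\wedge Q$. Since there are only finitely many partitions of a finite set and the one-block partition $\{R\}$ is trivially independent, the whole difficulty is to produce a \emph{canonical finest} independent decomposition and show every maximal refinement coincides with it.

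The engine of the proof, and the step I expect to be the only real obstacle, is the following closure lemma: if $P=\{B_a\}$ and $Q=\{C_b\}$ are both independent, then $P\wedge Q$ is again independent. To prove it I would exploit the direct-sum structure coming from $Q$. Fix a block $B_a$ and set $S_{ab}=\Span(B_a\cap C_b)$. Because $B_a\cap C_b\subseteq C_b$, each $S_{ab}$ sits inside the summand $U_b=\Span(C_b)$ of the direct sum $V=\bigoplus_b U_b$ guaranteed by independence of $Q$; subspaces lying in distinct summands of a direct sum are automatically independent, so $\sum_b S_{ab}=\bigoplus_b S_{ab}$, and hence $\dim\Span(B_a)=\dim\sum_b S_{ab}=\sum_b\dim S_{ab}$, using that $\{B_a\cap C_b\}_b$ partitions $B_a$. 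Summing over $a$ and invoking independence of $P$ yields $\dim V=\sum_a\dim\Span(B_a)=\sum_{a,b}\dim S_{ab}$, which is exactly the rank-additivity defining independence of $P\wedge Q$.

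With the lemma in hand the theorem becomes formal. I would let $\mathcal{D}^\ast$ be the common refinement of all independent decompositions of $R$; applying the lemma finitely many times shows $\mathcal{D}^\ast$ is itself independent, and by construction it refines every independent decomposition. If $\mathcal{D}'$ were any independent refinement of $\mathcal{D}^\ast$, then $\mathcal{D}'$ both refines $\mathcal{D}^\ast$ and, being one of the decompositions over which the meet is taken, is refined by $\mathcal{D}^\ast$, forcing $\mathcal{D}'=\mathcal{D}^\ast$; thus $\mathcal{D}^\ast$ admits no proper independent refinement and is a finest independent decomposition. For uniqueness, any decomposition $\mathcal{D}$ with no proper independent refinement has $\mathcal{D}^\ast$ as an independent refinement, so $\mathcal{D}^\ast=\mathcal{D}$; hence the finest independent decomposition is unique. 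The incidence-independent statement then follows verbatim by replacing the stoichiometric vectors with the incidence vectors of the reactions, since the lemma and the conclusion used only the abstract direct-sum reformulation.
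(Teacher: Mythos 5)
Your argument is correct, but note that this paper does not actually prove the statement---it is imported from \cite{had2022}---and the proof in that reference (and in \cite{hernandez:delacruz1}) is constructive rather than lattice-theoretic: one fixes a maximal linearly independent subset of $R$, records in a ``coordinate graph'' which basis vectors occur together in the expansion of some element of $R$, and shows that the connected components of that graph induce the unique finest independent decomposition. Your route is genuinely different. The closure lemma is the right pivot and your proof of it is sound: $B_a=\bigsqcup_b(B_a\cap C_b)$ gives $\Span(B_a)=\sum_b S_{ab}$, the containments $S_{ab}\subseteq U_b$ together with $V=\bigoplus_b U_b$ force that sum to be direct, and summing $\dim\Span(B_a)=\sum_b\dim S_{ab}$ over $a$ yields exactly the rank additivity defining independence of $P\wedge Q$, with empty intersections contributing zero. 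Once the independent partitions are closed under finite meets, existence, minimality and uniqueness of $\mathcal{D}^{\ast}$ are formal as you say, and the reduction of the incidence-independent case is legitimate because incidence-independence is rank additivity for the columns of $I_a$ (equivalently $n-\ell=\sum_i(n_i-\ell_i)$, Equation \ref{incidence:check}). What each approach buys: the coordinate-graph proof doubles as the algorithm that DECENT implements, while your meet-closure argument is shorter, basis-free, and explains conceptually why the finest decomposition is canonical (it is the bottom element of a meet-closed family), at the cost of giving no computation. One wording point: ``no independent refinement'' must be read as ``no \emph{proper} independent refinement,'' which your final step implicitly and correctly does.
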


Theorem \ref{FID:unique} deals with vectors in general, which can be put in the context of reaction vectors of a CRN, i.e., columns of the stoichiometric matrix. It establishes the existence and uniqueness of a network decomposition that cannot be further decomposed into another independent decomposition that we define as follows: 

\begin{definition} 
    The unique independent (incidence-independent) decomposition of a CRN with no independent refinement is called its {\emph{finest independent decomposition}} or {\emph{FID}} {(\emph{finest incidence-independent decomposition}} or {\emph{FIID}}). 
\end{definition}

\begin{remark}
    The uniqueness of FID and FIID for each CRN indicates that each is a new network property.
\end{remark}

\begin{remark}
We built a MATLAB tool called DECENT (DEcompositions of Chemical rEaction NeTworks) based on the earlier results of Hernandez et al. \cite{had2022,hernandez:delacruz1}, to compute the subnetworks under the FIID. This tool also incorporates the computation of subnetworks under the FID, which was previously developed in \cite{INDECS}. Furthermore, the tool is able to display the network deficiencies of a given CRN and its subnetworks under the FID and FIID.
This tool is stored in GitHub \cite{DECENT}. We provide a brief discussion in the Appendix \ref{DECENT:Anderies}.
\end{remark}

\begin{example}
\label{example:Anderies2}
    Reconsider the Anderies et al. \cite{anderies2013topology,fortun2018deficiency} pre-industrial network:
    \begin{align*}
    A_1+2A_2 &{\stackrel{R_1}\rightarrow} A_2+2A_1\\
        A_1+A_2 &{\stackrel{R_2}\rightarrow} 2A_2\\
        A_2 &\underset{R_4}{\stackrel{R_3}\rightleftharpoons} A_3. 
    \end{align*}

    Applying DECENT, we obtain the FID consisting of $\{R_1,R_2\}$ and $\{R_3,R_4\}$, while we obtain the FIID consisting of $\{R_1\}$, $\{R_2\}$ and $\{R_3,R_4\}$. 
\end{example}

\begin{example}\label{example:Anderies2.1}
    Consider a translated version \cite{HHLL2022,Johnston2014} of the Anderies et al. pre-industrial network in Example \ref{example:Anderies}:
    \begin{align*}
         &A_1 \underset{R_2}{\stackrel{R_1}\rightleftharpoons} A_2 \underset{R_4}{\stackrel{R_3}\rightleftharpoons} A_3. 
    \end{align*}
    Both the FID and FIID consist of $\{R_1,R_2\}$ and $\{R_3,R_4\}$. 
\end{example}

We recall the following useful lemmas from previous work:

\begin{lemma} \cite{fari2021}
\label{lemma:one}
  For a reaction network $\mathcal{N}$, any coarsening of an independent (incidence independent) decomposition is independent (incidence independent).  
\end{lemma}

The following result highlights the importance of FID (FIID) as a ``generator'' since any independent (incidence independent) decomposition can be constructed from coarsening the FID (FIID):

\begin{lemma} \cite{had2022}
\label{lemma:three}
Any independent (incidence independent) decomposition is a coarsening of the
FID (FIID).
\end{lemma}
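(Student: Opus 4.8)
The plan is to derive the statement by combining the uniqueness half of Theorem~\ref{FID:unique} with a single auxiliary fact: the common refinement of two independent decompositions is again independent. Granting that fact, the argument is short. Let $\mathcal{F}$ denote the FID and let $\mathcal{D}$ be an arbitrary independent decomposition of $\mathcal{N}$. I would form their common refinement $\mathcal{F}\wedge\mathcal{D}$, whose blocks are the nonempty intersections $F\cap D$ of a block $F$ of $\mathcal{F}$ with a block $D$ of $\mathcal{D}$. This common refinement refines $\mathcal{F}$ and, by the auxiliary fact, is itself independent. Since the FID admits no strictly finer independent decomposition, we must have $\mathcal{F}\wedge\mathcal{D}=\mathcal{F}$; but $\mathcal{F}\wedge\mathcal{D}$ also refines $\mathcal{D}$, so $\mathcal{F}$ refines $\mathcal{D}$, i.e., $\mathcal{D}$ is a coarsening of $\mathcal{F}$, as claimed.

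It remains to establish the auxiliary fact, which I expect to be the crux. The clean way to phrase independence is in terms of column spaces: writing $S=\Span N$ and $S_B=\Span\{\text{columns of }N\text{ indexed by }B\}$ for a reaction subset $B$, a decomposition $\{B_1,\dots,B_k\}$ is independent exactly when $S=\bigoplus_i S_{B_i}$, equivalently $\operatorname{rank}N=\sum_i\operatorname{rank}N_{B_i}$. Let $\mathcal{D}=\{D_1,\dots,D_p\}$ and $\mathcal{E}=\{E_1,\dots,E_q\}$ be independent. I would first fix a block $D_j$ and observe that, because $D_j=\bigsqcup_l(D_j\cap E_l)$, its column space satisfies $S_{D_j}=\sum_l S_{D_j\cap E_l}$; moreover this sum is direct, since each $S_{D_j\cap E_l}\subseteq S_{E_l}$ and the $S_{E_l}$ form a direct sum by independence of $\mathcal{E}$. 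Hence $\operatorname{rank}N_{D_j}=\sum_l\operatorname{rank}N_{D_j\cap E_l}$ for every $j$. Summing over $j$ and using independence of $\mathcal{D}$ yields $\operatorname{rank}N=\sum_j\operatorname{rank}N_{D_j}=\sum_{j,l}\operatorname{rank}N_{D_j\cap E_l}$, which is precisely independence of the common refinement $\mathcal{D}\wedge\mathcal{E}$.

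Two remarks on the plan. First, the incidence-independent statement is proved verbatim: the direct-sum condition on the column spaces of $N$ is replaced by the analogous condition on the column spaces of the incidence matrix $I_a$, and every line of the argument carries over with $I_a$ in place of $N$. Second, I expect the only genuinely delicate point to be the directness of the sum $\sum_l S_{D_j\cap E_l}$: it rests on the elementary fact that subspaces sitting inside mutually independent subspaces are themselves mutually independent, so that intersecting the blocks of $\mathcal{D}$ with the globally independent family $\mathcal{E}$ cannot manufacture new linear dependencies. Everything else is bookkeeping with ranks together with the uniqueness supplied by Theorem~\ref{FID:unique}.
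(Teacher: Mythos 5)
Your argument is correct, but note that the paper itself offers no proof of this lemma --- it is quoted from \cite{had2022} --- so there is no in-text proof to match against. In the cited source the coarsening property is obtained essentially as a by-product of the explicit construction of the FID: reactions are grouped via a coordinate graph built from linear dependencies among reaction vectors, the connected components are shown to form the unique finest independent decomposition, and one then checks that any independent decomposition must be a union of these components. Your route is different and more lattice-theoretic: you only use the defining minimality of the FID from Theorem~\ref{FID:unique} together with the closure of independence under common refinement. The key auxiliary fact is proved correctly --- the decomposition $S_{D_j}=\sum_l S_{D_j\cap E_l}$ is direct because each summand sits inside the corresponding $S_{E_l}$ and the $S_{E_l}$ are already in direct sum, so a vanishing sum forces each term to vanish; summing the resulting rank identities over $j$ and invoking independence of $\mathcal{D}$ gives independence of $\mathcal{D}\wedge\mathcal{E}$. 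The passage from there to the conclusion (an independent refinement of the FID must equal the FID, and $\mathcal{F}\wedge\mathcal{D}$ refines $\mathcal{D}$) is sound, and your observation that the incidence-independent case is verbatim with $I_a$ in place of $N$ is also right, since incidence-independence is equivalent to $\operatorname{rank} I_a=\sum_i \operatorname{rank} I_{a,i}$. What your approach buys is a short, construction-free proof that works for any notion of independence defined by additivity of ranks over a column partition; what the construction in \cite{had2022} buys is an algorithm for actually computing the FID, which is what the DECENT package implements.
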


Independent (incidence independent) decompositions define ``building blocks'' of
positive (complex balanced) equilibria of a kinetic system.

We observed that in many reaction networks of biochemical and environmental systems that the FID is a coarsening of the FIID. We call such a network an \emph{FDC (Finest Decomposition Coarsening)} network.

Our first main result is the following:

\begin{theorem} (FDC Characterization).
For a reaction network $\mathcal{N}$, the following statements are
equivalent.
\label{theorem:FDC:equiv}
\begin{enumerate}
    \item[i.] $\mathcal{N}$ is an FDC network.
    \item[ii.] The FID of $\mathcal{N}$ is bi-independent.
    \item[iii.] Any independent decomposition of $\mathcal{N}$ is bi-independent.
    \item[iv.] The deficiency of $\mathcal{N}$ is equal to the sum of the deficiencies of the subnetworks of any independent decomposition, i.e., $\delta=\delta_1+\delta_2+\ldots+\delta_k$.
    \label{theorem:characterization}
\end{enumerate}
\end{theorem}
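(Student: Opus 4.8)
The plan is to establish the four-way equivalence as a single cyclic chain $(i)\Rightarrow(ii)\Rightarrow(iii)\Rightarrow(iv)\Rightarrow(i)$, since each arrow is a one-line deduction from one of the three recalled lemmas. The whole argument is essentially bookkeeping about the coarsening relation together with the dual independent / incidence-independent structure. The key observation organizing everything is that the FID is independent \emph{by definition}, so the only genuine question is whether it is \emph{also} incidence-independent (equivalently, bi-independent); each of the four statements turns out to be a reformulation of exactly this single condition.

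For $(i)\Rightarrow(ii)$: by hypothesis the FID is a coarsening of the FIID, and the FIID is incidence-independent, so Lemma~\ref{lemma:one} yields that the FID is incidence-independent; being independent by definition, the FID is therefore bi-independent. For $(ii)\Rightarrow(iii)$: take an arbitrary independent decomposition $D$. By Lemma~\ref{lemma:three}, $D$ is a coarsening of the FID; since $(ii)$ makes the FID incidence-independent, Lemma~\ref{lemma:one} propagates incidence-independence to $D$, and as $D$ is independent it is bi-independent.

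For $(iii)\Rightarrow(iv)$: apply $(iii)$ to an arbitrary independent decomposition to obtain bi-independence, then read off $\delta=\delta_1+\delta_2+\ldots+\delta_k$ from the implication $(3)\Rightarrow(1)$ of Lemma~\ref{lemma:two}. To close the cycle, $(iv)\Rightarrow(i)$: instantiate $(iv)$ at the FID (which is independent) to get $\delta=\delta_1+\delta_2+\ldots+\delta_k$ for the FID; Lemma~\ref{lemma:two} then upgrades ``independent with additive deficiency'' to bi-independent, hence incidence-independent, and finally Lemma~\ref{lemma:three} forces the FID to be a coarsening of the FIID, which is precisely the FDC property.

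Each step is routine once the lemmas are available, so I do not expect a serious obstacle; the point requiring care is the interplay of the quantifiers. Statements $(iii)$ and $(iv)$ are universal (``any independent decomposition''), whereas $(ii)$ concerns only the single distinguished decomposition FID, so I must be careful to apply each universal statement to an arbitrary decomposition where the argument needs it and to instantiate it at the FID where that alone suffices. The one genuine subtlety is keeping the direction of coarsening straight: every independent decomposition is a coarsening of the FID, and incidence-independence is inherited by coarser decompositions through Lemma~\ref{lemma:one}. This inheritance is exactly what allows a statement about the single decomposition FID to be equivalent to a statement about \emph{all} independent decompositions.
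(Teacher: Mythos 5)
Your proposal is correct and follows essentially the same route as the paper: each implication in the cycle $(i)\Rightarrow(ii)\Rightarrow(iii)\Rightarrow(iv)\Rightarrow(i)$ is obtained from the same lemmas (coarsenings inherit incidence-independence, every independent decomposition coarsens the FID, and the deficiency-sum characterization of bi-independence). The only difference is that you spell out the final step of $(iv)\Rightarrow(i)$ — invoking Lemma~\ref{lemma:three} to conclude that the incidence-independent FID is a coarsening of the FIID — which the paper leaves implicit.
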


\begin{proof}
i) $\implies$ ii): Since the FID by assumption is a coarsening of the FIID, it follows from Lemma \ref{lemma:one} that it
is also incidence independent, hence bi-independent. ii) $\implies$ iii): By Lemma \ref{lemma:three}, the independent
decomposition is a coarsening of the FID. Hence, by Lemma \ref{lemma:one} and the assumption, it is also
incidence independent, hence bi-independent. iii) $\implies$ iv): This follows from Lemma \ref{lemma:two}. iv)
$\implies$ i): By assumption and Lemma \ref{lemma:two}, the FID is incidence independent. 
\end{proof}

\subsubsection{Discussion of the LCD as the {coarsest} $\mathscr{C}$-decomposition}
\label{LCD:C:decomposition}

Fari{\~n}as et al. \cite{fari2021} introduced the concepts of $\mathscr{C}$-decomposition and $\mathscr{C}^*$-decomposition (defined in Definition \ref{definition:C:Cstar:decomposition}), demonstrating that both are incidence-independent. This was established prior to the formal definition of common complexes. Subsequently, Fontanil and Mendoza \cite{fontanil} introduced the general ``set $\mathscr{CC}_D$ of common complexes of a decomposition $D$,''
consisting of all complexes contained in sets of complexes of at least two distinct subnetworks under the decomposition $D$. Hence,
$$\mathscr{CC}_D = \bigcup_{i<j} (\mathcal{C}_i \cap \mathcal{C}_j)$$where $i = 1, 2, \ldots, k-1$ and $k$ represents the number of subnetworks \cite{fontanil}.
In this paper, the new concept of ``common complexes of a network $\mathscr{CC}$'' is introduced
and defined as $\mathscr{CC}:=\mathscr{CC}_{FID}$. Its cardinality is called $\mathscr{CC}$-cardinality. In particular, the following result holds \cite{fari2021}:

\begin{proposition}
    {Any decomposition with $|\mathscr{C}\mathscr{C}_D|\le1$ is incidence-independent.}
    \label{cdecomposition:incidenceindependent}
\end{proposition}

\begin{definition}
A decomposition with $|\mathscr{C}\mathscr{C}_D|$ equals 0 (equals $1$) is also called a $\mathscr{C}$-decomposition ($\mathscr{C}^*$-decomposition).
\label{definition:C:Cstar:decomposition}
\end{definition}

\begin{remark}
Together with Proposition \ref{cdecomposition:incidenceindependent}, the structure theorem for $\mathscr{C}$-decompositions of Fari{\~n}as et al. \cite{fari2021} as follows reveals that the linkage class decomposition (LCD) is the {coarsest} $\mathscr{C}$-decomposition of a reaction network.
\label{LCD:coarsening:FIID}
\end{remark}

\begin{theorem} (Structure Theorem for $\mathcal{C}$-decomposition)
Let $\mathcal{L}_1$, $\mathcal{L}_2$, $\ldots$, $\mathcal{L}_\ell$ be the linkage classes of a network $\mathcal{N}$. A decomposition $\mathcal{N}=\mathcal{N}_1 \cup \mathcal{N}_2 \cup \cdots \cup \mathcal{N}_k$ is a $\mathcal{C}$-decomposition if and only if each $\mathcal{N}_i$ is the union of linkage classes and each linkage class is contained in only one $\mathcal{N}_i$. In other words, the linkage class decomposition is a refinement of the specified decomposition $\mathcal{N}=\mathcal{N}_1 \cup \mathcal{N}_2 \cup \cdots \cup \mathcal{N}_k$.
\label{structure:theorem:farinas}
\end{theorem}

Thus, the LCD is the fundamental connection between the topological and algebraic structure of a network. In this context, weakly reversible networks, where, due to the coincidence of linkage and strong linkage classes, all relevant connectivity information is contained in the former, represent a high level of alignment between topology and algebra in a network.

\subsection{The definition of the FDA of a CRN}

We now use the coarsening relationships between FID, FIID, and LCD to define the six FDA classes of reaction networks. A key property of this classification is that it separates ILC networks, i.e., those with independent linkage classes, and DLC networks, i.e., those with dependent linkage classes, into three subsets each. In the following definition, for decompositions $A$ and $B$, $A\leq B$ denotes $B$ coarsens to $A$ and $A<B$ means $B$ coarsens to $A$ but is not equal to $A$.

\begin{definition}
    The finest decompositions' architecture (FDA) class of a reaction network is one of the following six classes:
    \begin{enumerate}
        \item {\bf ILC-1}: $\text{LCD}\leq\text{FID}\leq\text{FIID}$
        \item {\bf ILC-2}: $\text{LCD}\leq\text{FIID}<\text{FID}$
        \item {\bf ILC-3}: $\text{LCD}<\text{FID}$, $\text{LCD}<\text{FIID}$, FID and FIID are not ordered
        \item {\bf DLC-1}: $\text{FID}<\text{LCD}\leq\text{FIID}$
        \item {\bf DLC-2}: $\text{FID}<\text{FIID}$, $\text{LCD}<\text{FIID}$, FID and LCD are not ordered
        \item {\bf DLC-3}: $\text{LCD}\leq\text{FIID}$, FID and LCD are not ordered, FID and FIID are not ordered.
    \end{enumerate}
\end{definition}

The FDA classes are graphically depicted in Figure 1, whereby an arrow from $A$ to $B$ indicates $A$ coarsens to $B$ and an $\ne$ beside the arrow indicates the inequality of $A$ and $B$.

\begin{figure}
\centering
    \begin{center}
    \includegraphics[width=13cm,height=8cm,keepaspectratio]{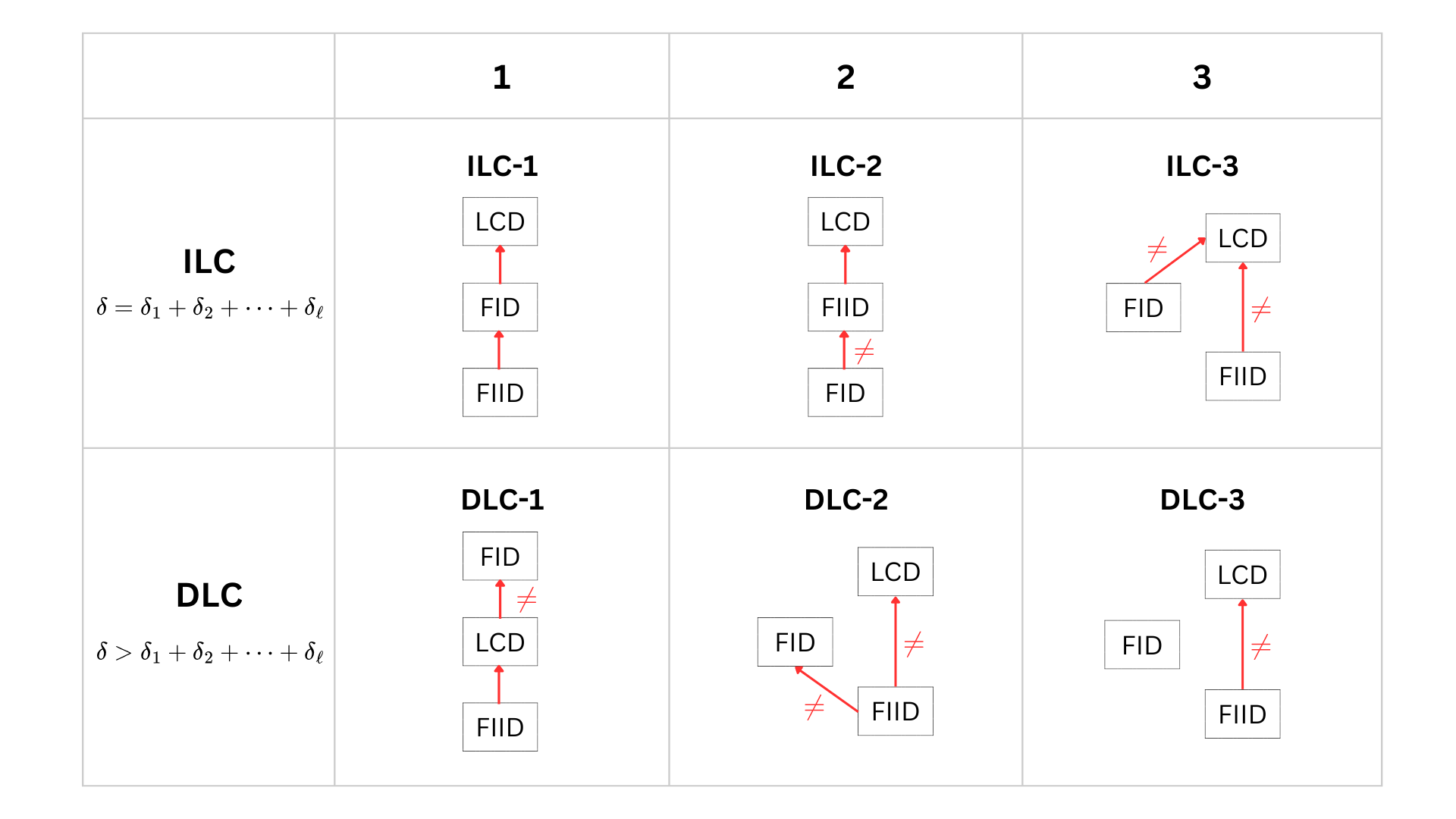}
    \caption{\textbf{The finest decompositions' architecture of a chemical reaction network.} The FDA hierarchy is established by the partial order 'coarsens to'. While ILC networks are characterized by deficiency additivity over their linkage classes, DLC networks lack this alignment. The solid arrow $\rightarrow$ indicates that the decomposition at the tail is a refinement of (or ''coarsens to'') the decomposition at the head. The inequality sign $\ne$ marks where a coarsening relationship exists but the two decompositions are not equal.}\label{fig:FDAtypes}
    \vspace{-1.5em}
    \end{center}    
\end{figure}

We now show that each reaction network is assigned a unique FDA class by exhausting the relations of the three decompositions:
\begin{enumerate}
    \item[D1.] $\text{LCD}=\text{FID},\text{ FID}=\text{FIID},\text{ LCD}=\text{FIID}$
    \item[D2.] $\text{LCD}=\text{FID},\text{ FID}\ne\text{FIID},\text{ LCD}\ne\text{FIID}$
    \item[D3.] $\text{LCD}\ne\text{FID},\text{ FID}=\text{FIID},\text{ LCD}\ne\text{FIID}$
    \item[D4.] $\text{LCD}\ne\text{FID},\text{ FID}\ne\text{FIID},\text{ LCD}=\text{FIID}$
    \item[D5.] {$\text{LCD}\ne\text{FID},\text{ FID}=\text{FIID},\text{ LCD}=\text{FIID}$}
    \item[D6.] {$\text{LCD}=\text{FID},\text{ FID}\ne\text{FIID},\text{ LCD}=\text{FIID}$} 
    \item[D7.] {$\text{LCD}=\text{FID},\text{ FID}=\text{FIID},\text{ LCD}\ne\text{FIID}$}
    \item[D8.] $\text{LCD}\ne\text{FID},\text{ FID}\ne\text{FIID},\text{ LCD}\ne\text{FIID}$ 
\end{enumerate}

Inspecting the equalities and inequalities, cases 5, 6 and 7 are impossible to happen. Table \ref{possible} gives all the possibilities for each FDA class.

\begin{table}[h!]
\centering
\begin{tabular}{|c|c|c|c|c|c|c|}
\hline
\textbf{} & \textbf{ILC-1} & \textbf{ILC-2} & \textbf{ILC-3} & \textbf{DLC-1} & \textbf{DLC-2} & \textbf{DLC-3} \\ \hline
D1 & \checkmark & x & x & x & x & x \\ \hline
D2 & \checkmark & x & x & x & x & x \\ \hline
D3 & \checkmark & x & x & x & \checkmark & x \\ \hline
D4 & x & \checkmark & \checkmark & \checkmark & \checkmark & \checkmark \\ \hline
D8 & \checkmark & \checkmark & \checkmark & \checkmark & \checkmark & \checkmark \\ \hline
\end{tabular}
\caption{Possible decomposition relations for each FDA}
\label{possible}
\end{table}

\subsection{Algebraic aspects of CRNs}

The set of all subnetworks of a reaction network, together with the union operation forms a commutative idempotent semigroup, in fact a monoid, if one allows the empty set as a network. If one uses multiplicative notation, a network decomposition corresponds to ``factoring'' a network. Furthermore, coarsening defines a natural partial ordering on the set of decompositions of a network.

In this view, independent decompositions and incidence-independent decompositions are two different ways of determining ``relatively prime'' factors. Both approaches are simple, allowing only single factors due to the operation's idempotency. Both ways are also preserved by partial ordering.
Consequently, the FID and FIID provide two different ways of unique prime factorization. FDC networks hence constitute the subset where these prime factorizations are aligned by the partial ordering, which constitutes an appealing algebraic architecture. 
An appealing algebraic property is the uniqueness of this prime factorization, which occurs in the
four classes ILC-1, DLC-1, DLC-2, and ILC-3.
The first three classes constitute the set of FDC networks which
are characterized by Theorem \ref{theorem:characterization}. The observation of the FDC
property in numerous biochemical networks (s. Sections \ref{section:FDA:ILC} and \ref{section:FDA:DLC}) was the initial motivation for
this study.

\section{The finest decompositions' architecture classes of independent linkage class (ILC) networks}
\label{section:FDA:ILC}

In this section, we introduce the deficiency difference $\Delta$ of a reaction network and show that its range of values characterizes the FDA classes of ILC networks. We provide examples for each class and identify some interesting subclasses. An interesting result is that all deficiency zero networks are contained in the FDE subclass of ILC-1 as it leads to a new characterization of such networks.

\subsection{Characterization and examples of ILC-1 networks}

Recall that the deficiency $\delta_D$ of a decomposition $D$ is the sum of the deficiencies of its subnetworks. We introduce the following new concepts:

\begin{definition}
    The \textit{deficiency difference of a decomposition} is given by $\Delta_D=|\delta-\delta_D|$ where $\delta$ is the network deficiency. Its range, $r(\Delta_D)$, is an element of $\{=,>,\geq\}$ according to whether $\Delta_D$ is zero, positive, or nonnegative, respectively.
\end{definition}

A direct consequence of Lemma 1 of \cite{fortun2}, Proposition 6 of \cite{fari2021}, and Lemma \ref{lemma:two} is the following:
\begin{proposition}\label{prop:defdiff}
    Let $D$ be a decomposition of a network. If $D$ is independent, then $\Delta_D=\delta_D-\delta$, while if $D$ is incidence independent, then $\Delta_D=\delta-\delta_D$. Furthermore, $D$ is bi-independent if and only if $\Delta_D=0$.
\end{proposition}

\begin{definition}
    The \textit{deficiency difference of a reaction network} is the vector $$\Delta=(\Delta_{\text{LCD}},\Delta_{\text{FID}},\Delta_{\text{FIID}}).$$ Its range $r(\Delta)$ is the vector of ranges of its components.
\end{definition}

We can now characterize the ILC-1 class and its subclass FDE:
\begin{proposition}
    Let $\mathcal N$ be a network with deficiency difference $\Delta$.
    \begin{enumerate}
        \item[i.] $\mathcal N$ is an ILC-1 network if and only if $r(\Delta)=(=,=,\geq)$.
        \item[ii.] $\mathcal N$ is an FDE network if and only if $r(\Delta)=(=,=,=)$ or simply $\Delta=(0,0,0)$.
    \end{enumerate}
\end{proposition}
\begin{proof}
    For i.), Lemma \ref{lemma:one} gives $\text{LCD}\leq\text{FID}$ if and only if LCD is bi-independent, which by Proposition \ref{prop:defdiff} is equivalent to $\Delta_\text{LCD}=0$. Similarly, $\text{FID}\leq\text{FIID}$ if and only if FID is bi-independent and this is equivalent to $\Delta_\text{FID}=0$. Moreover, FIID may or may not be bi-independent. 
    
    For ii.), FDE networks are subclasses of ILC-1 networks and so by the preceding, $\Delta_\text{LCD}=\Delta_\text{FID}=0$. Furthermore, $\text{FID}=\text{FIID}$ if and only if the FIID is bi-independent and equivalently, $\Delta_\text{FIID}=0$.
\end{proof}

Recall that under ILC-1, the reaction networks are those in which the coarsest decomposition corresponds to the LCD, followed by the FID, the finest decomposition being the FIID.

\subsubsection{The subclass of FDE networks}

In some cases,the network's FID and FIID coincide, these is a special subclass of ILC-1 networks. We provide a sufficient condition for a network to attain FDE and that is, to be a deficiency zero network.

\begin{theorem}
\label{FDC:coincidence}
For any deficiency zero network, FID = FIID.
\end{theorem}

\begin{proof}
Suppose that the reaction network has deficiency zero. We will work with the restriction of the linear map $Y$ to ${{\rm{Im \ }} I_a}$, that is, the map given by $Y|_{{\rm{Im \ }} I_a} : {{\rm{Im \ }} I_a} \to  S$. By definition of the stoichiometric subspace $S$, this is a surjective map. Observe that ${\rm{ker \ }} Y|_{{\rm{Im \ }} I_a} = {\rm{ker \ }} Y \cap {{\rm{Im \ }} I_a}$ and so, $\delta = 0$ if and only if ${\rm{ker \ }} Y \cap {{\rm{Im \ }} I_a} = 0$. It follows that $Y|_{{\rm{Im \ }} I_a} : {{\rm{Im \ }} I_a} \to  S$ is an isomorphism.

The FID induces a sum ${\rm{Im \ }}I_a = {\rm{Im \ }}I_{a,1}+ {\rm{Im \ }}I_{a,2} + \cdots + {\rm{Im \ }}I_{a,k}$. By definition of $Y$, each $Y({\rm{Im \ }}I_{a,i})$ is contained in $S_i$. Since $Y|_{{\rm{Im \ }} I_{a,i}}$ is an isomorphism, each ${\rm{Im \ }}I_{a,i}$ is isomorphic to $S_i$. Therefore, $n_i -\ell_i = s_i$ for each $i=1,2,\ldots,k$. Summing up, we get $n - \ell$ on the left-hand side and $s$ on the right-hand side, showing that FID is bi-independent. 
By Theorem \ref{theorem:FDC:equiv}, the network is FDC.

To show that FID = FIID, we prove by contradiction. Suppose $\textup{FID}\ne\textup{FIID}$. Since the network is FDC, there is an incidence independent refinement of the FID. Without loss of generality, we can assume ${\rm{Im \ }}I_{a,i}$ is the direct sum of ${\rm{Im \ }}I'_{a,i}$ and ${\rm{Im \ }}I''_{a,i}$. Then, their images under $Y$ can be written as $S_i = S_i' + S_i''$. If this sum is
not direct, then there is a non-zero element that is the image under $Y|_{\textup{Im} \ I_a}$ of distinct elements in ${\rm{Im \ }}I'_{a,i}$ and ${\rm{Im \ }}I''_{a,i}$, a contradiction to the bijectivity of $Y|_{\textup{Im} \ I_a}$. On the other hand, if the sum is direct, it is a
contradiction to the FID.
\end{proof}

\begin{corollary}
    A network has zero deficiency if and only if its FID is a zero deficiency decomposition (ZDD).
    \label{ZDD}
\end{corollary}

\begin{proof}
    ``$\Leftarrow$'' For any independent decomposition, the sum of the subnetwork deficiencies is greater
or equal to the network deficiency. By assumption, these are all zero, implying the claim.
``$\Rightarrow$''  By Theorem \ref{FDC:coincidence}, we have FID=FIID. Hence, the network deficiency is the sum of the subnetwork deficiencies. Since the left-hand side is zero, it follows that each summand on the right-hand side must also be zero.
\end{proof}

\begin{remark}
It is easy to see that the right-hand side of Corollary \ref{ZDD} is equivalent to the statement that any
independent decomposition is ZDD, so that the left-hand side is just the special case of the trivial
decomposition. What is surprising is that this special case implies that the general statement. Theorem \ref{FDC:coincidence} shows that this is due to the FDE property.
\end{remark}

We also provide a characterization for FDE networks.

\begin{proposition}
The following statements are equivalent:
\begin{enumerate}
    \item[i.] A CRN is an FDE network, i.e., FID  = FIID.
    \item[ii.] FID and FIID are bi-independent.
    \item[iii.] Every independent and incidence-independent network is bi-independent.
    \item[iv.] The network deficiency is the sum of the subnetwork deficiencies for every independent and every incidence-independent network.
\end{enumerate}
\end{proposition}

\begin{proof}
We can divide each statement into a substatement about independent and incidence-independent decompositions, e.g., statement i) is equivalent to FID is a coarsening of FIID and vice versa. The first substatements are identical to those for FDC networks and hold, since each FDE network is FDC. The remaining sub-statements are derived by simply interchanging  independence and incidence-independence, FID and FIID in the arguments.    
\end{proof}

We know that deficiency zero networks produce FDE networks. We use matrix theory to provide network construction for deficiency zero networks. In particular, we show that a network with invertible $Y^{\top} Y$ has deficiency zero:

\begin{proposition}
For any reaction network, if $Y^{\top} Y$ is invertible, then $\delta=0$.
\end{proposition}

\begin{proof}
If $Y^{\top} Y$ is invertible, then ${\rm{ker}} (Y^{\top} Y) = 0$. Since ${\rm{ker \ }} Y$ is contained in ${\rm{ker}} (Y^{\top} Y)$, it follows that ${\rm{ker \ }} Y = 0$. Therefore, ${\ker \ } Y \cap { \ \rm{Im \ }} I_a = 0$ and equivalently,  $\delta =0$. 
\end{proof}

\begin{example}
    Consider the translated Anderies CRN with stoichiometric matrix given by
    \[
    \begin{blockarray}{ccccc}
        \matindex{$R_1$} & \matindex{$R_2$} & \matindex{$R_3$} & \matindex{$R_4$}\\
        \begin{block}{[cccc]c}
            -1 & 1 & 0 & 0 & \matindex{$A_1$}\\
            1 & -1 & -1 & 1 & \matindex{$A_2$}\\
            0 & 0 & 1 & -1 & \matindex{$A_3$}\\
        \end{block}
    \end{blockarray}
    \]
     which has rank 2. Note that $Y^\top Y=I_3$, which is invertible and the deficiency is $3-1-2=0$. 
\end{example}

A natural question to ask is if a reaction network's finest decomposition FID and FIID coincide, is $Y^\top Y$ invertible? The following example shows that the class of networks with invertible $Y^{\top} Y$ is a proper subclass of networks with deficiency equal to zero:

\begin{example}
      Consider the power law system (of type PL-RDK) representing a model of the Earth's carbon cycle with direct air capture (DAC) \cite{fortun3} has the following CRN:
\allowdisplaybreaks
\begin{multicols}{2}
 \noindent
 \begin{align*}
    &R_1: A_1 + 2A_2 \rightarrow 2A_1 + A_2\\
    &R_2: 2A_1 +A_2 \rightarrow A_1 + 2A_2 \\
    &R_3: A_2 \rightarrow A_3 \\
    &R_4: A_3 \rightarrow A_2 \\
    &R_5: A_4 \rightarrow A_{2} \\
    &R_6: A_{2} \rightarrow A_5 \\
    &R_7: A_5 \rightarrow A_4.
\end{align*}
\end{multicols}

      The CRN has more complexes than the number of species. Hence, the left generalized inverse of its molecularity matrix $Y$ does not exist by Proposition \ref{prop:tan}. It follows that $Y^\top Y$ is not invertible.

      However, note that the stoichiometric matrix of DAC CRN is given by
      \[
      \begin{blockarray}{cccccccc}
      \matindex{$R_1$} & \matindex{$R_2$} & \matindex{$R_3$} & \matindex{$R_4$} & \matindex{$R_5$} & \matindex{$R_6$} & \matindex{$R_7$}\\
      \begin{block}{[ccccccc]c}
          1 & -1 & 0 & 0 & 0 & 0 & 0 & \matindex{$A_1$}\\
          -1 & 1 & -1 & 1 & 1 & -1 & 0 & \matindex{$A_2$}\\
          0 & 0 & 1 & -1 & 0 & 0 & 0 & \matindex{$A_3$}\\
          0 & 0 & 0 & 0 & -1 & 0 & 1 & \matindex{$A_4$}\\
          0 & 0 & 0 & 0 & 0 & 1 & -1 & \matindex{$A_5$}\\
      \end{block}    
      \end{blockarray}
      \]
      which has rank 4. Hence, the DAC CRN has deficiency equal to $6-2-4=0$. Therefore, DAC CRN satisfies $\textup{FID}=\textup{FIID}$ by Theorem \ref{FDC:coincidence}.
\end{example}

Some other equivalent statements are presented in Appendix \ref{app:mattheo}. We already know that deficiency zero networks are FDE and so we ask: Is the converse true? The following example demonstrates that some positive-deficiency networks can also be FDE.

\begin{example}
    Consider the network

{
\begin{center}
\begin{tikzpicture}
        \tikzset{vertex/.style = {minimum size=4em}}
        \tikzset{edge/.style = {->,> = {Stealth[length=2mm, width=2mm]}}}
        \tikzset{edge1/.style = {bend left,->,> = {Stealth[length=2mm, width=2mm]}, line width=0.20mm}}
        \tikzset{edge2/.style = {bend right,->,> = {Stealth[length=2mm, width=2mm]}, line width=0.20mm}}
        \node[vertex] (C) at (0,0) {$2X_1+X_2$};
        \node[vertex] (Z) at (2,0) {$3X_1$};
        \node[vertex] (A) at (4,0) {$X_1+2X_2$};
        \node[vertex] (B) at (6,0) {$3X_2$};
        \draw[edge, thick] (C) to (Z)
        node[above,xshift=-9.5mm] {$R_1$};
        \draw[edge, thick] (Z) to (A)
        node[above,xshift=-11mm] {$R_2$};
        \draw[edge, thick] (A) to (B)
        node[above,xshift=-9.5mm] {$R_3$};
        \draw[edge1, thick] (B) to (C)
        node[below,xshift=21mm,yshift=-0.5mm] {$R_4$};
        
\end{tikzpicture}
\end{center}
}

    with incidence matrix given by
    \[
      \begin{blockarray}{ccccc}
      \matindex{$R_1$} & \matindex{$R_2$} & \matindex{$R_3$} & \matindex{$R_4$}\\
      \begin{block}{[cccc]c}
          -1 & 0 & 0 & 1 & \matindex{$2X_1+X_2$}\\
          1 & -1 & 0 & 0 & \matindex{$3X_1$}\\
          0 & 0 & 1 & -1 & \matindex{$3X_2$}\\
          0 & 1 & -1 & 0 & \matindex{$X_1+2X_2$}\\
      \end{block}    
      \end{blockarray}
      \]
      having the trivial FIID. On the other hand, the stoichiometric matrix is given by
      \[
      \begin{blockarray}{ccccc}
      \matindex{$R_1$} & \matindex{$R_2$} & \matindex{$R_3$} & \matindex{$R_4$}\\
      \begin{block}{[cccc]c}
          1 & -2 & -1 & 2 & \matindex{$X_1$}\\
          -1 & 2 & 1 & -2 & \matindex{$X_2$}\\
      \end{block}    
      \end{blockarray}
      \]
      with rank equal to 1. This network also has the trivial FID. Thus, the network is an FDE. However, the deficiency is given by $\delta=4-1-1=2>0$.
\end{example}

\subsubsection{Non-FDE ILC-1 networks}
We provide other examples of ILC-1 networks that are non-FDE.

\begin{example}\label{NonFDEILC1Example}
    Consider the following reaction network:
    \begin{align*}
    2X_1 \quad{\stackrel{R_2}\leftarrow}\quad X_1\quad{\stackrel{R_1}\rightarrow} \quad  0\quad{\stackrel{R_3}\leftarrow}\quad X_2\quad{\stackrel{R_4}\rightarrow}\quad2X_2.
    \end{align*}
    
    Observe that the LCD is the trivial decomposition. The stoichiometric matrix is given by 
    \[
      \begin{blockarray}{ccccc}
      \matindex{$R_1$} & \matindex{$R_2$} & \matindex{$R_3$} & \matindex{$R_4$}\\
      \begin{block}{[cccc]c}
          -1 & 1 & 0 & 0 & \matindex{$X_1$}\\
          0 & 0 & -1 & 1 & \matindex{$X_2$}\\
      \end{block}    
      \end{blockarray}
      \]
      having the FID: $\{\{R_1,R_2\},\{R_3,R_4\}\}$. The incidence matrix is given by
      \[
      \begin{blockarray}{ccccc}
      \matindex{$R_1$} & \matindex{$R_2$} & \matindex{$R_3$} & \matindex{$R_4$}\\
      \begin{block}{[cccc]c}
          -1 & -1 & 0 & 0 & \matindex{$X_1$}\\
          0 & 1 & 0 & 0 & \matindex{$2X_1$}\\
          0 & 0 & -1 & -1 & \matindex{$X_2$}\\
          0 & 0 & 0 & 1 & \matindex{$2X_2$}\\
      \end{block}    
      \end{blockarray}
      \]
      having the FIID: $\{\{R_1\},\{R_2\},\{R_3\},\{R_4\}\}$. Therefore, this network is an ILC-1 network that is not an FDE.
\end{example}

\subsection{Characterization and examples for ILC-2 networks}
We first characterize the ILC-2 class:

\begin{proposition}
    Let $\mathcal N$ be a network with deficiency difference $\Delta$. Then $\mathcal N$ is an ILC-2 network if and only if $r(\Delta)=(=,>,=)$.
\end{proposition}
\begin{proof}
    Suppose that $\mathcal N$ is an ILC-2 network. By Lemma \ref{lemma:one}, $\text{FIID}<\text{FID}$ implies that FIID is bi-independent, but not the FID. Hence the ranges of the deficiency difference of the network's FIID and FID are $=$ and $>$, respectively. Furthermore, FIID being bi-independent follows that the network is an ILC which implies that $\Delta_\text{LCD}=0$. Therefore, we have that $r(\Delta)=(=,>,=)$. 

    Conversely, suppose that $r(\Delta)=(=,>,=)$. It is already known that $\text{LCD}\leq\text{FIID}$. Since $\delta=\delta_\text{FIID}$, the FIID is forced to be independent by Lemma \ref{lemma:two} implying that $\text{FIID}<\text{FID}$.
\end{proof}

A subset of trivial FIID networks forms an ILC-2 subclass:

\begin{proposition}
    Any network with trivial FIID and non-trivial FID is an ILC-2 network.
\end{proposition}
\begin{proof}
    A trivial FIID implies that it is bi-independent and that $\text{LCD}=\text{FIID}$. Thus, the network must be an ILC network. Furthermore, a non-trivial FID implies that $\text{FIID}<\text{FID}$. Note also that the network has a single linkage class.
\end{proof}

The following ILC-2 example has a non-trivial FIID and more the one linkage class:

\begin{example}\label{ILC3Example}
Consider the following ILC network with two linkage classes:\\
\begin{tikzpicture}[node distance=1.5cm, every node/.style={minimum size=1cm}, ->, >=Stealth]
  \node (A) {$X_n$};
  \node (B) [right of=A, node distance=2cm] {$X_n + X_{n+1}$};
  \node (C) [right of=B, node distance=2cm] {$X_{n+1}$};
  \node (D) [below of=B] {$0$};
  \node (E) [below of=D] {$X_m + X_{m+1}$};
  \node (F) [left of=E, node distance=2cm] {$X_m$};
  \node (G) [right of=E, node distance=2.1cm] {$X_{m+1}$};

  \draw (A) -- node[midway, above] {$R_2$} (B);
  \draw (B) -- node[midway, above] {$R_3$} (C);
  \draw (C) -- node[midway, right] {$R_4$} (D);
  \draw (D) -- node[midway, left] {$R_1$} (A);
  \draw (D) -- node[midway, left] {$R_5$} (F);
  \draw (F) -- node[midway, below] {$R_6$} (E);
  \draw (E) -- node[midway, below] {$R_7$} (G);
  \draw (G) -- node[midway, right] {$R_8$} (D);

\begin{scope}[xshift=6cm,yshift=-1.5cm]
    \node (P1) {$X_p$};
    \node (P2) [right of=P1, node distance=1.5cm] {$2X_p$};

    \node at ($(P1)!0.5!(P2)$) {$\rightleftharpoons$};

    \node at ($(P1)!0.5!(P2) + (0,0.4)$) {$R_9$};
    \node at ($(P1)!0.5!(P2) + (0,-0.4)$) {$R_{10}$};
  \end{scope}
\end{tikzpicture}

where $m,n,p\in\mathbb N$ are distinct. The decompositions of the network are as follows:
\begin{enumerate}
    \item $\text{LCD}:\{\{R_1,R_2,R_3,R_4,R_5,R_6,R_7,R_8\},\{R_9,R_{10}\}\}$
    \item $\textup{FID}:\{\{R_1,R_3\},\{R_2,R_4\},\{R_5,R_7\},\{R_6,R_8\},\{R_9,R_{10}\}\}$
    \item $\text{FIID}: \{\{R_1,R_2,R_3,R_4\},\{R_5,R_6,R_7,R_8\},\{R_9,R_{10}\}\}$.
\end{enumerate}
Hence, the network exhibits an ILC-2 network.
\end{example}

\subsection{Characterization and examples of ILC-3 networks}
This FDA class is characterized as follows:
\begin{proposition}
    Let $\mathcal N$ be a network with deficiency difference $\Delta$. Then $N$ is an ILC-3 network if and only if $r(\Delta)=(=,>,>)$.
\end{proposition}
\begin{proof}
    Suppose that $\mathcal N$ is an ILC-3 network. The condition $\text{LCD}<\text{FID}$ implies that $\mathcal N$ is an independent linkage class network, and hence the range of $\Delta_\text{LCD}$ is $=$. Since both FID and FIID are not ordered, both must not be bi-independent implying that the ranges of $\Delta_\text{FID}$ and $\Delta_\text{FIID}$ are $<$.

    Conversely, suppose that $r(\Delta)=(=,>,>)$. The assumption that $\Delta_\text{LCD}=0$ implies that $\mathcal N$ is ILC and so, $\text{LCD}<\text{FID}$. Since $\Delta_\text{FID}>0$ and $\Delta_\text{FIID}>0$, both FID and FIID cannot be bi-independent and hence, cannot be ordered.
\end{proof}

\begin{example} \label{ILC2Ex}
Consider the following reaction network:
\begin{center}
        \begin{tikzpicture}[node distance=1.5cm, every node/.style={minimum size=1cm}, ->, >=Stealth]
  \node (A) {$X_n$};
  \node (B) [right of=A, node distance=2cm] {$X_n + X_{n+1}$};
  \node (C) [right of=B, node distance=2cm] {$X_{n+1}$};
  \node (D) [below of=B] {$0$};
  \node (E) [below of=D] {$X_m + X_{m+1}$};
  \node (F) [left of=E, node distance=2cm] {$X_m$};
  \node (G) [right of=E, node distance=2.1cm] {$X_{m+1}$};

  \draw (A) -- node[midway, above] {$R_2$} (B);
  \draw (B) -- node[midway, above] {$R_3$} (C);
  \draw (C) -- node[midway, right] {$R_4$} (D);
  \draw (D) -- node[midway, left] {$R_1$} (A);
  \draw (D) -- node[midway, left] {$R_5$} (F);
  \draw (F) -- node[midway, below] {$R_6$} (E);
  \draw (E) -- node[midway, below] {$R_7$} (G);
  \draw (G) -- node[midway, right] {$R_8$} (D);

\begin{scope}[xshift=6cm,yshift=-1.5cm]
    \node (P1) {$X_p$};
    \node (P2) [right of=P1, node distance=1.5cm] {$2X_p$};

    \node at ($(P1)!0.5!(P2)$) {$\rightleftharpoons$};

    \node at ($(P1)!0.5!(P2) + (0,0.4)$) {$R_9$};
    \node at ($(P1)!0.5!(P2) + (0,-0.4)$) {$R_{10}$};
  \end{scope}

\end{tikzpicture}
\begin{tikzpicture}[node distance=1.5cm, every node/.style={minimum size=1cm}, ->, >=Stealth]
  \node (A) {$2X_q$};
  \node (B) [right of=A, node distance= 1.6cm] {$X_q$};
  \node (C) [right of=B, node distance= 1.6cm] {$0$};
  \node (D) [right of=C, node distance= 1.6cm] {$X_r$};
  \node (E) [right of=D, node distance= 1.6cm] {$2X_r$};

  \draw (B) -- node[midway, above] {$R_{11}$} (A);
  \draw (B) -- node[midway, above] {$R_{12}$} (C);
  \draw (D) -- node[midway, above] {$R_{13}$} (C);
  \draw (D) -- node[midway, above] {$R_{14}$} (E);
  \end{tikzpicture}
  \end{center}
  where $n,m,p,q,r\in\mathbb N$ are distinct. Observe that the network from Example \ref{NonFDEILC1Example} is a subnetwork of this network. The addition of reactions $R_{1}$ to $R_{10}$ enabled the reaction network to become an ILC-3 network.
\end{example}

A notable observation is that the network from the preceding example has Example \ref{ILC3Example} as its subnetwork, which suggests that an ILC-3 network can be derived from ILC-1 and ILC-2 networks.

The following proposition allows the construction of ILC-3 subclasses from those of ILC-1 and ILC-2:
\begin{proposition}\label{ILC-1+ILC-2}
    Let $(\mathcal{S}_1, \mathcal{C}_1, \mathcal{R}_1)$ and $(\mathcal{S}_2, \mathcal{C}_2, \mathcal{R}_2)$ be ILC-1 and ILC-2 networks, respectively. If their union is an ILC network, then $(\mathcal{S}_1\cup\mathcal{S}_2, \mathcal{C}_1\cup\mathcal{C}_2, \mathcal{R}_1\cup\mathcal{R}_2)$ is an ILC-3 network.
\end{proposition}

Proposition \ref{ILC-1+ILC-2} tells us that we can construct ILC-3 networks from ILC-1 and ILC-2. It is worthwhile investigating if it is possible to decompose ILC-3 networks into ILC-1 and ILC-2 subnetworks.

\section{Kinetic properties of ILC systems}
\label{kinetic properties:ILC}

\subsection{Application to the existence of positive equilibria}

One advantage of networks with the ILC property is that the existence of a set of positive equilibria for the entire network can be examined through its linkage classes. This is one of the motivations for studying network decompositions: by breaking the network into smaller and more manageable pieces, we can more easily determine the network properties, such as the existence of positive equilibria. Theorem \ref{result:boros} by B. Boros \cite{boros:thesis} provides a characterization of the existence of the set of positive equilibria of a CRN under mass-action kinetics based on its linkage classes. This was extended to a large class of power-law kinetic systems in Theorem \ref{extension:talabis} by Hernandez and Mendoza \cite{Hernandez:polyPLP}.

	\begin{theorem}
		Let $(\mathcal{N},K)$ be a mass-action system that satisfies $\delta=\delta_1+\cdots+\delta_\ell$. Then
		$$E_+ \ne \varnothing {\text{ if and only if }} E_+^i \ne \varnothing$$
		for each $i=1,\ldots,\ell.$
		\label{result:boros}
	\end{theorem}

    	\begin{theorem}
		Let $(\mathcal{N},K)$ be a PL-RDK system that satisfies $\delta=\delta_1+\cdots+\delta_\ell$ and ${\widehat{T}}={\widehat{T}}^1 \oplus \cdots \oplus {\widehat{T}}^\ell$ (which we call the $\widehat{T}$-independence). Then
		$$E_+ \ne \varnothing {\text{ if and only if }} E_+^i \ne \varnothing$$
		for each $i=1,\ldots,\ell.$
		\label{extension:talabis}
	\end{theorem}

In ILC networks, the LCD is a coarsening of the FID. Hence, one may check the existence of positive equilibria at the level of the linkage classes of the network. 

\begin{example}\label{example:ILCproperty}
    Consider the following network with the ILC property
    \begin{align*}
         A \underset{R_2}{\stackrel{R_1}\rightleftharpoons} A+B \quad \quad B \underset{R_4}{\stackrel{R_3}\rightleftharpoons} B+C. 
    \end{align*}

Assuming mass action kinetics, the first linkage class has associated positive equilibria of the form $a=\tau_1>0$ and $b=\dfrac{k_1}{k_2}$ while the second has the form $a=\tau_2>0$ and $b=\dfrac{k_3}{k_4}$. Note that the positive equilibria of the subnetworks exist for all rate constants. It follows the same for the entire network.

\end{example}

\subsection{Application to equilibria parametrization}

Theorem \ref{feinberg decomp} forms the foundation of the framework proposed by Hernandez et al. \cite{HernandezetalPCOMP2023} for computing the parametrization of positive equilibria of an entire network based on its independent subnetworks. Specifically, the theorem states that the set of positive equilibria of the full network is the intersection of the sets of positive equilibria of its independent subnetworks. Since we are focusing on ILC systems, this allows one to compute the set of positive equilibria for the entire network by analyzing the equilibria of the subsystems associated with individual linkage classes.

\begin{example}
    Consider the carbon cycle model with the direct ocean capture technology \cite{alaminDOC} as follows:
\[\begin{aligned}
    A_1 + 2A_2 \xrightleftharpoons{\qquad} A_2 + 2A_1\qquad \begin{tikzpicture}[node distance = {20mm}, baseline=(current  bounding  box.center)]
			\node (A17) {$A_{17}$};
            \node (A3) [above of=A17]{$A_3$};
            \node (A2) [left of=A3]{$A_2$};
            \node (A4) [left of=A17]{$A_4$};
            \draw[->] (A3) -- (A17);
            \draw[->] (A17) -- (A4);
            \draw[->] (A4) -- (A2);
            \draw [-left to] ($(A2.east) + (0pt, 2pt)$) -- ($(A3.west) + (0pt, 2pt)$);
    \draw [-left to] ($(A3.west) + (0pt, -2pt)$) -- ($(A2.east) + (0pt, -2pt)$);
        \end{tikzpicture},
\end{aligned}\]
 which describes the carbon cycle interactions through the transfer of carbon between the carbon pools: land ($A_1$), atmosphere ($A_2$), ocean ($A_3$), carbon stock ($A_4$) and direct air capture ($A_{17}$). This network has the ILC property, so the set of positive equilibria can be computed independently for each linkage class, and the intersection of these sets yields the set of positive equilibria for the entire network \cite{alaminDOC}.
\end{example}

\subsection{The single linkage class case}

Clearly, if the underlying network of a kinetic system has a single linkage class, then Theorem \ref{extension:talabis} is not
very useful. However, determining its FDA class can be quite beneficial. If the FDA class is ILC-2 or ILC-3, the FID is guaranteed to be a proper refinement, and hence Theorem \ref{feinberg decomp} can be fruitfully used for equilibria computation. This possibility is particularly important for large networks.

In the general case, we have the following Corollary for Theorem \ref{extension:talabis}:

\begin{corollary}
Let $(\mathcal{N}, \mathcal{K})$ be a $\widehat{T}$-independent kinetic system (as in Theorem \ref{extension:talabis}). If $\mathcal{L}_i$ is a linkage class of
FDA class ILC-2 or ILC-3, then its FID can be used to facilitate the analysis of equilibria for the entire
network.
\end{corollary}

\section{FDA classes of DLC networks}
\label{section:FDA:DLC}

In this section, we first show that the deficiency difference only partially characterizes the FDA classes of DLC networks. However, this is easily remedied by additionally considering the common complexes cardinality or $\mathscr{CC}$-cardinality of DLC-1 networks. Recall from Section \ref{LCD:C:decomposition} that we define the new concept of the set of common complexes as $\mathscr{CC}=\mathscr{CC}_{FID}$.  The combined use of deficiency difference $\Delta$ and $\mathscr{CC}$-cardinality results in a convenient procedure for computing the FDA class of any network. We provide numerous examples for the three DLC classes.


\subsection{Partial characterization of the FDA classes of DLC networks}
\label{partial:characterization:DLC}

The following proposition shows that the deficiency difference characterizes DLC-3 networks, but not DLC-1 and DLC-2 networks:

\begin{proposition}
\label{prop:delta:dlc3}
    For any $\mathcal N$ with deficiency difference $\Delta$,
    \begin{enumerate}
        \item[i.] $\mathcal N$ is a DLC-3 network if and only if $r(\Delta)=(>,>,>)$.
        \item[ii.] if $\mathcal N$ is a DLC-1 network, then $r(\Delta)=(>,=,>)$.
        \item[iii.] if $\mathcal N$ is a DLC-2 network, then $r(\Delta)=(>,=,>)$.
    \end{enumerate}
\end{proposition}

\begin{proof}
    For (ii) and (iii), in both DLC-1 and DLC-2 architectures, the network is by definition a Dependent Linkage Class (DLC) system, which implies that the LCD is not independent. Consequently, $\Delta_{LCD} > 0$ and the Finest Incidence-Independent Decomposition (FIID) cannot be independent, otherwise its coarsening, the LCD, would also be independent by Lemma \ref{lemma:one}, forcing $\Delta_{FIID} > 0$. However, in both DLC-1 and DLC-2 classes, the Finest Independent Decomposition (FID) is a coarsening of the FIID, which according to Theorem \ref{theorem:FDC:equiv} ensures that the FID is bi-independent. Thus, $\Delta_{FID}=0$. For (ii), for DLC-3, the DLC property similarly ensures $\Delta_{LCD} > 0$ and $\Delta_{FIID} > 0$, but because the FID and FIID are not ordered, the FID cannot be incidence-independent and thus lacks bi-independence, resulting in $\Delta_{FID} > 0$ and the range $(>,>,>)$.
\end{proof}

This means that we still need to identify a network property differentiating DLC-1 and DLC-2 networks.

\subsection{$\mathscr{C}\mathscr{C}$-cardinality condition for DLC-1 networks
}
\label{section:CC:based}

In this section, we focus on DLC networks and show that their FDA classes can be characterized on the basis of the network's common complexes cardinality $|\mathscr{C}\mathscr{C}|$. Surprisingly to a greater extent, the network invariant plays a decisive role for DLC networks than that of deficiency for ILC networks. The computation of a DLC network's $\mathscr{C}\mathscr{C}$ enables a systematic determination of its FDA class.

\subsubsection{Zero $\mathscr{C}\mathscr{C}$-cardinality networks}

The next proposition shows that the DLC-1 networks are precisely the $\mathscr{C}\mathscr{C}$ cardinality zero networks:

\begin{proposition}
\label{prop:DLC1:CC0}
    Let $\mathcal{N}$ be a DLC network. Then $\mathcal{N}$ is a DLC-1 network if and only if $|\mathscr{CC}| = 0$.
\end{proposition}

\begin{proof}
$(\Leftarrow)$ We assume $|\mathscr{CC}|= 0$ for the Finest Independent Decomposition (FID). By definition, this makes the FID a $\mathscr{C}$-decomposition. From Theorem \ref{structure:theorem:farinas}, the LCD is a refinement of any $\mathscr{C}$-decomposition. Thus, the FID is a coarsening of the LCD. Since the network is DLC, LCD $<$ FID (i.e., the FID is of course independent while the LCD is dependent), which forces the proper coarsening relationship LCD $<$ FID. However, the LCD is always a coarsening of the FIID, which leads to the hierarchy FID $>$ LCD $>$ FIID.
    $(\Rightarrow)$ We assume that the network is DLC-1. By definition, the decomposition hierarchy is FID $>$ LCD $>$ FIID, which means that the LCD is a coarsening of the FID. By the Structure Theorem for $\mathscr{C}$-decompositions, a decomposition is a coarsening of the LCD if and only if it is a $\mathscr{C}$-decomposition. Since the FID is a refinement of the LCD, and the LCD itself has a common complex cardinality of zero, the FID must also have no common complexes. Thus, $|\mathscr{CC}|=0$.
\end{proof}

\subsubsection{A procedure for determining the FDA class of a reaction network}
\label{procedure:FDA}

We now describe a simple procedure to determine the FDA class of any reaction network:
\begin{enumerate}
    \item Compute the LCD, FID and FIID of the network. This can be done for the LCD with CRNToolbox and for the FID and FIID with DECENT.
    \item Compute the deficiency difference $\Delta$. The deficiency information is also available from the tools in Step 1.
    \item If $r(\Delta) \ne (>, =, >)$, then the network's FDA class is exactly one of the ILC classes or DLC-3.
    \item Otherwise, compute the $\mathscr{CC}$-cardinality. Since the network is DLC, if this is zero, the class is DLC-1, otherwise, DLC-2.
\end{enumerate}

A summary of important computed values are given in Tables \ref{deficiency:difference}, \ref{C:decomposition}, and \ref{deficiency:difference:cc:cardinalities} for six selected biochemical networks.

\begin{table}[h]
    \centering
    \renewcommand{\arraystretch}{1.5} 
    \begin{tabular}{|p{4.5cm}|c|c|c|c|}
    \hline
    \textbf{Network} & $\delta$ & $\delta_{LCD}$ & $\delta_{FID}$ & $\delta_{FIID}$ \\
    \hline
    INSMS & 7 & 0 & 7 & 0 \\
    \hline
    INRES & 19 & 0 & 19 & 0 \\
    \hline
    Schmitz Wnt & 2 & 0 & 2 & 0 \\
    \hline
    Feinberg-Lee Wnt & 2 & 0 & 2 & 0 \\
    \hline
    MacLean Wnt & 4 & 0 & 4 & 0 \\
    \hline
    A power law model of a complex coordination of multi-scale cellular responses to environmental stress & 1 & 0 & 1 & 0 \\
    \hline
    \end{tabular}
    \caption{The deficiency of the network $\delta$ and the sum of the deficiencies of the subnetworks under the LCD, FID and FIID of six selected biochemical networks.}
    \label{deficiency:difference}
\end{table}

\begin{table}[h]
    \centering
    \renewcommand{\arraystretch}{1.5} 
    \begin{tabular}{|p{4.5cm}|c|c|}
    \hline
    \textbf{Network} & $\mathscr{C}\mathscr{C}$ & $|\mathscr{C}\mathscr{C}|$ \\
\hline
 INSMS  & $\{0, X_6, X_{13}, X_{20}\}$ &  4\\
 \hline
 INRES  & $\{\}$ & 0\\
 \hline
 Schmitz Wnt  & $\{A_1\}$ & 1\\
 \hline
 Feinberg-Lee Wnt  & $\{0,A_2,A_{23}\}$ & 3\\
 \hline
 MacLean Wnt  & $\{A_{13}\}$ & 1\\
 \hline
    A power law model of a complex coordination of multi-scale cellular responses to environmental stress & $\{X_2+X_7, X_3, X_4, X_8, X_9\}$ & 5\\
    \hline
    \end{tabular}
    \caption{The set of common complexes $\mathscr{CC}$ and the $\mathscr{CC}$-cardinalities of six selected biochemical networks.}
    \label{C:decomposition}
\end{table}

\begin{table}[h]
    \centering
    \renewcommand{\arraystretch}{1.5} 
    \begin{tabular}{|p{4.5cm}|c|c|c|c|}
    \hline
    \textbf{Network} & $\Delta$ & $r(\Delta)$ & $|\mathscr{C}\mathscr{C}|$ & FDA class \\
    \hline
    INSMS & (7,0,7) & $(>,=,>)$ & 4 & DLC-2 \\
    \hline
    INRES & (19,0,19) & $(>,=,>)$ & 0 & DLC-1 \\
    \hline
    Schmitz Wnt & (2,0,2) & $(>,=,>)$ & 1 & DLC-2 \\
    \hline
    Feinberg-Lee Wnt & (2,0,2) & $(>,=,>)$ & 3 & DLC-2 \\
    \hline
    MacLean Wnt & (4,0,4) & $(>,=,>)$ & 1 & DLC-2 \\
    \hline
    A power law model of a complex coordination of multi-scale cellular responses to environmental stress & (1,0,1) & $(>,=,>)$ & 5 & DLC-2 \\
    \hline
    \end{tabular}
    \caption{A summary of the deficiency differences $\Delta$, range of values $r(\Delta)$, $\mathscr{CC}$-cardinalities, and FDA classes of six selected biochemical networks.}
\label{deficiency:difference:cc:cardinalities}
\end{table}

\subsubsection{Examples of DLC-1 systems}

DLC-1 networks include all networks with $|\mathscr{C}\mathscr{C}|=0$, i.e., subnetworks have no common complexes. It contains the subclass of all networks with only the trivial decomposition as independent and there are at least two linkage classes. Furthermore, rank 1 networks with at least two linkage classes are also examples.

\begin{example}
Consider the following rank 1 network with two linkage classes:
$$0 \to 3A \quad 2A\to A.$$
Since the FID contains only the network itself with the two reactions, it follows that the network is of DLC-1.
\end{example}

\begin{example}\label{ex:SFsub}
    Consider the subnetwork of the Shinar-Feinberg network of calcium signaling in olfactory cilia \cite{feinberg:book} given by
\begin{align*}
    B\ &{\stackrel{R_3}\rightarrow}\ D+B\\
    D\ &{\stackrel{R_9}\rightarrow}\ 0.
    \end{align*}
Note that $\text{LCD}:\{\{R_3\},\{R_9\}\}$, $\textup{FIID}:\{\{R_3\},\{R_9\}\}$, and $\textup{FID}:\{\{R_3,R_9\}\}$. Thus, the network is of DLC-1.

Let us verify this result using the procedure in Section \ref{procedure:FDA}.
The network deficiency is $\delta = 1$ and the subnetwork deficiency sums are $\delta_{LCD} = 0$, $\delta_{FID} = 1$, and $\delta_{FIID} = 0$. Thus, the resulting deficiency difference vector is $\Delta = (1, 0, 1)$. Thus, $r(\Delta) = (>, =, >)$, which indicates that the network is not one of the ILC classes or DLC-3. At this point, the network must be DLC-1 or DLC-2. The subnetworks under the FID share no common complexes. Hence, $|\mathscr{CC}|=0$. Indeed, the network is of DLC-1.


\end{example}

\begin{example}
\label{example:INRES}
     Consider the model of insulin signaling in type 2 diabetes cell (INRES) \cite{Nyman} with 44 reactions provided in Appendix \ref{appendix:inres}. From Table \ref{C:decomposition}, the CRN has $|\mathscr{CC}|=0$ and it is a DLC-1 network.
\end{example}

A prominent subclass of DLC-1 networks consists of those whose trivial decomposition is its only independent decomposition, i.e., equals its FID, and has multiple linkage classes. Such networks of higher rank are difficult to analyze—an example for this is the terrestrial CDR (carbon dioxide removal) model of Heck et al. studied in \cite{Hernandez:multi}.

\subsection{Examples of DLC-2 and DLC-3 networks}

The result in Section \ref{partial:characterization:DLC} shows that DLC-2 and DLC-3 networks can be differentiated through their deficiency difference. In this section, we document a further difference: DLC-2 contains both low and higher $\mathscr{CC}$-cardinality networks while DLC-3 consists only of the latter. In analogy to low and higher deficiency, ``low'' means 0 or 1 while ``higher'' means 2 or greater.


\subsubsection{$\mathscr{CC}$-cardinality one networks}

The following proposition identifies a subclass of DLC-2:
\begin{proposition}
    \label{prop:DLC2:CC1}
    Any $\mathscr{CC}$-cardinality one network is contained in DLC-2.
\end{proposition}

\begin{proof}
By definition, the network has $|\mathscr{C}\mathscr{C}|=1$ if and only if the FID is a $\mathscr{C}*$-decomposition. By Proposition \ref{cdecomposition:incidenceindependent}, this implies that the FID is incidence independent, i.e., a coarsening of the FIID. Again this coarsening is proper, i.e., the FIID is not independent because it has a dependent coarsening, namely the LCD, since the network is DLC. Hence, the network has DLC-2 FDA.
\end{proof}

We know that a subnetwork of the Shinar-Feinberg network in Example \ref{ex:SFsub} can exhibit DLC-1 behavior. Interestingly, the next example shows that considering the entire network reveals a different class of DLC structure.

\begin{example}
    Consider the Shinar-Feinberg network \cite{feinberg:book} given by
    \begin{align*}
        A\ &{\underset{R_2}{\stackrel{R_1}\rightleftharpoons}}\ B\ {\stackrel{R_3}\rightarrow
        D+B}\\
        C+4D\ &{\underset{R_5}{\stackrel{R_4}\rightleftharpoons}} \ E\\
        B+E\ &\stackrel{R_6}\rightarrow \ F\ {\underset{R_8}{\stackrel{R_7}\rightleftharpoons}}\ A+E\\
        D\ & \stackrel{R_9}\rightarrow \ 0.
    \end{align*}
The decompositions of the network are as follows:
\begin{enumerate}
    \item $\text{LCD}:\{\{R_1,R_2,R_3\},\{R_4,R_5\},\{R_6,R_7,R_8\},\{R_9\}\}$
    \item $\textup{FID}:\{R_1,R_2,R_6,R_7,R_8\},\{R_3,R_9\},\{R_4,R_5\}\}$
    \item $\text{FIID}: \{\{R_1,R_2\},\{R_3\},\{R_4,R_5\},\{R_6\},\{R_7,R_8\},\{R_9\}\}$,
\end{enumerate}
which follows a class DLC-2 network.
\end{example}

\begin{example}
    Consider the network of Hernandez et al. \cite{HernandezetalPCOMP2023} as follows
$$2Y \stackrel{R_4}\rightarrow X\ {\underset{R_2}{\stackrel{R_1}\rightleftharpoons}}Y \ \ \ \ \ 2X \stackrel{R_3}\rightarrow  X+Y.$$
The decompositions of the network are as follows:
\begin{enumerate}
    \item $\text{LCD}:\{\{R_1,R_2,R_4\},\{R_3\}\}$
    \item $\textup{FID}:\{\{R_1,R_2\},\{R_3,R_4\}\}$
    \item $\text{FIID}: \{\{R_1,R_2\},\{R_3\},\{R_4\}\}$.
\end{enumerate}
Hence, the network is class DLC-2 network.
\label{network:Hernandez}
\end{example}

\begin{example}
The two multistationary Wnt signaling models Schmitz and MacLean \cite{HernandezetalECREWS,Maclean,schm2002} have $|\mathscr{C}\mathscr{C}|=1$ as seen in Table \ref{C:decomposition} and hence are also in DLC-2.
\end{example}

\subsubsection{Higher $\mathscr{CC}$-cardinality networks in DLC-2}

\begin{example}
        Consider the CRN for the 2-site  phosphorylation/dephosphorylation network \cite{CODH2018} given by
    \begin{align*}
        &S_0 + K \underset{R_2}{\stackrel{R_1}\rightleftharpoons} S_0K {\stackrel{R_3}\rightarrow} S_{1} + K \underset{R_5}{\stackrel{R_4}\rightleftharpoons} S_1K {\stackrel{R_6}\rightarrow} S_{2} + K\\
        &S_{2} + F \underset{R_{8}}{\stackrel{R_{7}}\rightleftharpoons} S_{2}F \ {\stackrel{R_{9}}\rightarrow} S_{1} + F \underset{R_{11}}{\stackrel{R_{10}}\rightleftharpoons} S_{1}F \ {\stackrel{R_{12}}\rightarrow} S_{0} + F
    \end{align*}
The decompositions of the network are as follows:
\begin{enumerate}
    \item $\text{LCD}:\{\{R_1,R_2,R_3,R_4,R_5,R_6\},\{R_7,R_8,R_9,R_{10},R_{11},R_{12}\}\}$
    \item $\textup{FID}:\{\{R_1,R_2,R_3,R_{10},R_{11},R_{12}\},\{R_{4},R_{5},R_{6},R_7,R_8,R_9\}\}$
    \item $\text{FIID}: \{\{R_1,R_2\},\{R_3\},\{R_4,R_5\},\{R_6\},\{R_7,R_8\},\{R_9\},\{R_{10},R_{11}\},\{R_{12}\}\}$.
\end{enumerate}
Hence, the network is a class DLC-2 network.
\label{network:PDnetwork}
\end{example}

\begin{example}
    Consider the Feinberg-Lee model \cite{feinberg:book} with reactions given in Appendix \ref{appendix:Wnt}. From Table \ref{C:decomposition}, $|\mathscr{C}\mathscr{C}|=3 \ge 1$. The deficiency of the network is two. Furthermore, the deficiencies of the subnetworks under the FIDs sum up to network's deficiency. Hence, $\Delta_{FID}=0$ and it belongs to DLC-2. Indeed, we can verify that the FID is a coarsening of the FIID, i.e., the FID contains a subnetwork $\{R_1, R_2, R_3, R_4, R_5, R_{12}, R_{13}\}$ while these two from separate linkage classes $\{R_1, R_2, R_5, R_{14}, R_{15}\}$ and $\{R_3, R_4, R_{12}, R_{13}\}$.
\end{example}

\begin{example}
    Consider the power law model of a complex coordination of multi-scale cellular responses \cite{AJLM2017,fonseca} with reactions provided in Appendix \ref{appendix:complex:coordination}. From Table \ref{C:decomposition}, we have $|\mathscr{C}\mathscr{C}|=5 \ge 1$. The deficiency of the network is one. Furthermore, the deficiencies of the subnetworks under the FIDs of these networks sum up to network's deficiency. Hence, $\Delta_{FID}=0$ and it belongs to DLC-2. Indeed, the FID has one subnetwork $\{R_5, R_6, R_7, R_8\}$ while these three form separate linkage classes $\{R_3, R_4, R_5, R_6\}$, $\{R_7\}$ and $\{R_8\}$.
\end{example}

\begin{example}
    Consider the model of insulin signaling in healthy cells (INSMS) by Sedaghat et al. \cite{Sedaghat}. The reactions are provided in Appendix \ref{appendix:insms}. The CRN was initially explored by Lubenia et al. \cite{LML2023}, who investigated fundamental concepts in Chemical Reaction Network Theory, specifically absolute concentration robustness and concordance. Their analysis yielded the system's FID. From a coarsening of this FID, they identified three subnetworks that are both functionally and structurally significant. In this network, the deficiencies of the subnetworks under the FIDs sum up to network's deficiency so $\Delta_{FID}=0$.
    Furthermore, based on Table \ref{C:decomposition}, we have $|\mathscr{CC}|=4 \ge 1$. Thus, the network belongs to DLC-2. 
\end{example}

\subsubsection{DLC-3 networks}

The following Proposition collects two basic properties of DLC-3 networks.
\begin{proposition}
Let $\mathcal{N}$ be a DLC-3 network. Then
\begin{enumerate}
    \item[i.] its deficiency is greater than zero, i.e., $\delta >0$.
    \item[ii.] its $\mathscr{CC}$-cardinality is greater than one, i.e., $|\mathscr{CC}|>1$.
\end{enumerate}
\end{proposition}

\begin{proof}
(i) By Proposition \ref{prop:delta:dlc3}, a network is DLC-3 if and only if its deficiency difference range is $r(\Delta) = (>, >, >)$. This implies $\Delta_{FID} = |\delta - \delta_{FID}| > 0$. By Theorem \ref{theorem:FDC:equiv}, the FID is not bi-independent. From Theorem \ref{FDC:coincidence}, all deficiency zero networks are FDE (i.e., FID = FIID), which are bi-independent. Since the network is not bi-independent, then $\delta > 0$. (ii) Propositions \ref{prop:DLC1:CC0} and \ref{prop:DLC2:CC1} state that all low $\mathscr{CC}$-cardinality networks are contained in DLC-1 and DLC-2, leaving only higher $\mathscr{CC}$-cardinality networks for DLC-3.
\end{proof}

Note that while the first necessary condition is shared with DLC-1 and DLC-2. On the other hand, the second one is unique to DLC-3.

\begin{example}
    The Mycobacterium tuberculosis NRP S-system \cite{fari2020,Magombedze} network with 80 reactions in Appendix \ref{appendix:tb:nrp} is of DLC-3.
\end{example}

\subsection{FDA impact on DLC network analysis}

For checking the existence of equilibria and their parametrization in DLC systems, the linkage classes are dependent and thus cannot serve as a suitable basis. Instead, one must use independent subnetworks of the network, which are of course not associated to linkage classes, in accordance with Theorem \ref{feinberg decomp}. This is especially important for networks of DLC-2 or DLC-3, where the Linkage Class Decomposition (LCD) and the FIID are not aligned. For instance, consider the network below, which is a DLC-2.

\begin{example}
    Reconsider the DLC-2 network in Example \ref{network:Hernandez}. It has the FID with two independent subnetworks $\mathcal{N}_1$ with reactions $R_1$ and $R_2$, and $\mathcal{N}_2$ with reactions $R_3$ and $R_4$. With the assumption of mass action kinetics, Hernandez et al. \cite{HernandezetalPCOMP2023} computed the positive equilibria to be $\left(\tau_1, \dfrac{k_1}{k_2}\tau_1 \right)$ for $\mathcal{N}_1$ and $\left(\tau_2, \sqrt{\dfrac{k_3}{k_4}}\tau_2 \right)$ for $\mathcal{N}_2$ for positive free parameters $\tau_1$ and $\tau_2$. The computations are performed on the subnetworks of the FID rather than on its linkage classes. Furthermore, both subnetworks admit positive steady states for arbitrary values of the rate constants. However, the full network admits a positive steady state only when the condition $\dfrac{k_1}{k_2}=\sqrt{\dfrac{k_3}{k_4}}$ holds.
\end{example}

\begin{example}
    Reconsider the insulin resistance signaling (INRES) network \cite{Nyman} with 44 reactions, which is of DLC-1, in Example \ref{example:INRES}. Hernandez and Lubenia \cite{Hernandezet:nonmassaction} computed positive equilibria parametrization by individually getting the positive equilibria parametrization of the FID.
\end{example}

\section{Summary, Conclusion, and Recommendations}
\label{summary:conclusion:recommendation}

The Finest Decompositions' Architecture (FDA) framework provides a comprehensive classification system that addresses the misalignment between a reaction network's visual connectivity and the algebraic properties governing its long-term dynamics. We categorize all networks into six distinct architectures based on the coarsening relationships between the Linkage Class Decomposition (LCD), the Finest Independent Decomposition (FID), and the Finest Incidence-Independent Decomposition (FIID).

Table \ref{deficiency:additivity} establishes a foundational distinction, showing how deficiency additivity, which is a key property for simplified analysis, behaves differently across the architectures. The results in the table show that additivity over the LCD is a hallmark of all ILC networks. Furthermore, the additivity over the FID is characteristic of FDC networks (ILC-1, DLC-1, and DLC-2).

\begin{table}[h]
\centering
\caption{{The FDC Properties and Deficiency Additivity of FDA Classes}}
\label{tab:fdc_deficiency_properties_revised}
\begin{tabular}{@{}llll@{}}
\toprule
\textbf{FDA Class} & \textbf{FDC Property} & \textbf{Deficiency Additivity} & \textbf{Deficiency Additivity} \\
& & \textbf{over LCD} & \textbf{over FID} \\ \midrule
\textbf{ILC-1} & FDC & Additive & Additive \\
\textbf{ILC-2} & Non-FDC & Additive & Not Additive \\
\textbf{ILC-3} & Non-FDC & Additive & Not Additive \\ \midrule
\textbf{DLC-1} & FDC & Not Additive & Additive \\
\textbf{DLC-2} & FDC & Not Additive & Additive \\
\textbf{DLC-3} & Non-FDC & Not Additive & Not Additive \\ \bottomrule
\end{tabular}
\label{deficiency:additivity}
\end{table}

Aside from the primary classification, several foundational results were obtained including the FDC characterization. A network is identified as an FDC (Finest Decomposition Coarsening) network if and only if the FID is bi-independent, which ensures that the deficiency of the whole network equals the sum of its subnetwork deficiencies under any independent decomposition. Furthermore, we established key relationship between deficiency zero networks and FDE. In particular, we proved that every deficiency zero network satisfies FID $=$ FIID, making it within the class of FDE networks. In addition, it turns out that a network has zero deficiency if and only if its FID is a zero deficiency decomposition (ZDD).

To facilitate classification, we introduce the Deficiency Difference ($\Delta$), measuring the gap between total and subnetwork deficiencies, and the Common Complexes Cardinality ($|\mathscr{CC}|$) of the FID. We summarize the results in Table \ref{summary:DD:CC}.
These metrics support a structured method for determining a network's FDA class. First, by calculating the deficiency difference range to isolate ILC classes and DLC-3; and second, by evaluating the $|\mathscr{CC}|$ value for DLC networks to distinguish between DLC-1 (where $|\mathscr{CC}|=0$) and DLC-2 (where $|\mathscr{CC}|\ge1$).

\begin{table}[h]
\centering
\caption{{Summary of the classification of each FDA class according to deficiency difference and $\delta - |\mathscr{CC}|$-Level}}
\renewcommand{\arraystretch}{1.5} 
\label{summary:DD:CC}
\begin{tabular}{|c|c|c|}
\hline
\textbf{FDA Class} & \textbf{Deficiency Difference} & \textbf{$\delta$ - $|\mathscr{CC}|$-Level} \\ \hline
ILC-1              & $(=, =, \geq)$               & $\delta \geq 0$                      \\ \hline
DLC-1              & $(>, =, >)$                  & $|\mathscr{CC}| = 0$                    \\ \hline
ILC-2              & $(=, >, =)$                  & $\delta \geq 1$                      \\ \hline
DLC-2              & $(>, =, >)$                  & $|\mathscr{CC}| \geq 1$                 \\ \hline
ILC-3              & $(=, >, >)$                  & $\delta \geq 1$                      \\ \hline
DLC-3              & $(>, >, >)$                  & $|\mathscr{CC}| \geq 2$                 \\ \hline
\end{tabular}
\end{table}

Table \ref{computational:roadmap:equilibria} translates the theoretical framework into a practical guide. It confirms that for any ILC network, the LCD is a reliable basis for analyzing both general and complex-balanced equilibria.

The table also shows that for all DLC networks, the LCD is insufficient for general equilibria analysis. For these networks (DLC-1, -2, and -3), one must use the FID to correctly decompose the system for analysis.

\begin{table}[h]
\centering
\caption{{The Computational Road Map for Equilibria Analysis by FDA Class}}
\label{tab:fda_roadmap_revised}
\begin{tabular}{@{}lll@{}}
\toprule
\textbf{FDA Class} & \textbf{To Find General Equilibria} & \textbf{To Find Complex-Balanced} \\ \midrule
\textbf{ILC-1} & Use LCD (or finer ID) & Use LCD (or finer IID) \\
\textbf{ILC-2} & Use LCD (or finer ID) & Use LCD (or finer IID) \\
\textbf{ILC-3} & Use LCD (or finer ID) & Use LCD (or finer IID) \\ \midrule
\textbf{DLC-1} & Must use FID & Use LCD (or finer IID) \\
\textbf{DLC-2} & Must use FID & Must use FIID \\
\textbf{DLC-3} & Must use FID & Use LCD (or finer IID) \\ \bottomrule
\end{tabular}
\label{computational:roadmap:equilibria}
\end{table}

Two promising areas of research based on the FDA framework are identified for future exploration. The first is on the analysis of kinetic subclasses. Future research should focus on subclasses of kinetics, such as power law kinetics, within specific FDA classes. The results could provide refinements for the ILC architectures of theorems valid for all ILC networks or extend the results on ILC networks to handle special kinetics on DLC classes. This is relevant as many biochemical systems belong to DLC classes, as demonstrated by the examples provided in Table \ref{deficiency:difference:cc:cardinalities} and, in general, in Section \ref{section:FDA:DLC}. In particular, the study of the FDE subclass in ILC-1 is particularly interesting, as it encompasses all deficiency zero networks. Next is on network decomposition and class variation. Since the FDA class of a subnetwork can differ from the whole network, this framework is critical for decomposing large networks. For instance, ILC-1 and ILC-2 networks can be used to construct ILC-3 networks, suggesting that identifying FDA classes can facilitate the analysis of intractable architectures by leveraging the properties of their simpler constituent subnetworks.


\section*{Acknowledgement}
BSH acknowledges the Institute of Mathematics, College of Science, University of the
Philippines Diliman for funding support through the Faculty Research Grant.




\appendix

\section{List of abbreviations}
\begin{tabular}{ll}
\noalign{\smallskip}\hline\noalign{\smallskip}
Abbreviation& Meaning \\
\noalign{\smallskip}\hline\noalign{\smallskip}
CRN& chemical reaction network\\
CRNT& chemical reaction network theory\\
DAC& direct air capture\\
DLC& dependent linkage classes\\
FDA& finest decompositions architecture\\
FDC& finest decompositions coarsening\\
FDE& finest decompositions equality\\
FID& finest independent decomposition\\
FIID& finest incidence-independent decomposition\\
ILC& independent linkage classes\\
LCD& linkage class decomposition\\
ODE& ordinary differential equation\\
PLK& power-law kinetics\\
PL-RDK& power-law reactant-determined kinetics\\
PL-NDK& power-law non-reactant-determined kinetics\\
ZDD& zero deficiency decomposition\\
\noalign{\smallskip}\hline
\end{tabular}

\allowdisplaybreaks

\section{Species and reactions of biochemical and environmental systems}
\label{reaction:networks}

\subsection{Insulin metabolic signaling (INSMS)}
\label{appendix:insms}
Species:
\begin{align*}
	& X_2 = \text{Unbound surface insulin receptors (in molar)} \\
	& X_3 = \text{Unphosphorylated once-bound surface receptors (in molar)} \\
	& X_4 = \text{Phosphorylated twice-bound surface receptors (in molar)} \\
	& X_5 = \text{Phosphorylated once-bound surface receptors (in molar)} \\
	& X_6 = \text{Unbound unphosphorylated intracellular receptors (in molar)} \\
	& X_7 = \text{Phosphorylated twice-bound intracellular receptors (in molar)} \\
	& X_8 = \text{Phosphorylated once-bound intracellular receptors (in molar)} \\
	& X_9 = \text{Unphosphorylated IRS-1 (in molar)} \\
	& X_{10} = \text{Tyrosine-phosphorylated IRS-1 (in molar)} \\
	& X_{11} = \text{Unactivated PI 3-kinase (in molar)} \\
	& X_{12} = \text{Tyrosine-phosphorylated IRS-1/activated PI 3-kinase complex (in molar)} \\
	& X_{13} = \text{PI(3,4,5)P$_3$ out of the total lipid population (in \%)} \\
	& X_{14} = \text{PI(4,5)P$_2$ out of the total lipid population (in \%)} \\
	& X_{15} = \text{PI(3,4)P$_2$ out of the total lipid population (in \%)} \\
	& X_{16} = \text{Unactivated Akt (in \%)} \\
	& X_{17} = \text{Activated Akt (in \%)} \\
	& X_{18} = \text{Unactivated PKC-$\zeta$ (in \%)} \\
	& X_{19} = \text{Activated PKC-$\zeta$ (in \%)} \\
	& X_{20} = \text{Intracellular GLUT4 (in \%)} \\
	& X_{21} = \text{Cell surface GLUT4 (in \%)}.
\end{align*}
Reactions:
\begin{multicols}{2}
\noindent
\begin{align*}
	& R_1: X_2 \rightarrow X_3 \\
	& R_2: X_3 \rightarrow X_2 \\
	& R_3: X_5 \rightarrow X_4 \\
	& R_4: X_4 \rightarrow X_5 \\
	& R_5: X_3 \rightarrow X_5 \\
	& R_6: X_5 \rightarrow X_2 \\
	& R_7: X_2 \rightarrow X_6 \\
	& R_8: X_6 \rightarrow X_2 \\
	& R_9: X_4 \rightarrow X_7 \\
	& R_{10}: X_7 \rightarrow X_4 \\
	& R_{11}: X_5 \rightarrow X_8 \\
	& R_{12}: X_8 \rightarrow X_5 \\
	& R_{13}: 0 \rightarrow X_6 \\
	& R_{14}: X_6 \rightarrow 0 \\
	& R_{15}: X_7 \rightarrow X_6 \\
	& R_{16}: X_8 \rightarrow X_6 \\
	& R_{17}: X_9 + X_4 \rightarrow X_{10} + X_4 \\
	& R_{18}: X_9 + X_5 \rightarrow X_{10} + X_5 \\
	& R_{19}: X_{10} \rightarrow X_9 \\
	& R_{20}: X_{10} + X_{11} \rightarrow X_{12} \\
	& R_{21}: X_{12} \rightarrow X_{10} + X_{11} \\
	& R_{22}: X_{14} + X_{12} \rightarrow X_{13} + X_{12} \\
	& R_{23}: X_{13} \rightarrow X_{14} \\
	& R_{24}: X_{15} \rightarrow X_{13} \\
	& R_{25}: X_{13} \rightarrow X_{15} \\
	& R_{26}: X_{16} + X_{13} \rightarrow X_{17} + X_{13} \\
	& R_{27}: X_{17} \rightarrow X_{16} \\
	& R_{28}: X_{18} + X_{13} \rightarrow X_{19} + X_{13} \\
	& R_{29}: X_{19} \rightarrow X_{18} \\
	& R_{30}: X_{20} \rightarrow X_{21} \\
	& R_{31}: X_{21} \rightarrow X_{20} \\
	& R_{32}: X_{20} + X_{17} \rightarrow X_{21} + X_{17} \\
	& R_{33}: X_{20} + X_{19} \rightarrow X_{21} + X_{19} \\
	& R_{34}: 0 \rightarrow X_{20} \\
	& R_{35}: X_{20} \rightarrow 0.
\end{align*}
\end{multicols}

\subsection{Insulin resistance signaling (INRES)}
\label{appendix:inres}
Species:
\begin{align*}
    & X_2 = \text{Inactive insulin receptors} \\
    & X_3 = \text{Insulin-bound receptors} \\
    & X_4 = \text{Tyrosine-phosphorylated receptors} \\
    & X_6 = \text{Internalized dephosphorylated receptors} \\
    & X_7 = \text{Tyrosine-phosphorylated and internalized receptors} \\
    & X_9 = \text{Inactive IRS-1} \\
    & X_{10} = \text{Tyrosine-phosphorylated IRS-1} \\
    & X_{20} = \text{Intracellular GLUT4} \\
    & X_{21} = \text{Cell surface GLUT4} \\
    & X_{22} = \text{Combined tyrosine/serine 307-phosphorylated IRS-1} \\
    & X_{23} = \text{Serine 307-phosphorylated IRS-1} \\
    & X_{24} = \text{Inactive negative feedback} \\
    & X_{25} = \text{Active negative feedback} \\
    & X_{26} = \text{Inactive PKB} \\
    & X_{27} = \text{Threonine 308-phosphorylated PKB} \\
    & X_{28} = \text{Serine 473-phosphorylated PKB} \\
    & X_{29} = \text{Combined threonine 308/serine 473-phosphorylated PKB} \\
    & X_{30} = \text{mTORC1} \\
    & X_{31} = \text{mTORC1 involved in phosphorylation of IRS-1 at serine 307} \\
    & X_{32} = \text{mTORC2} \\
    & X_{33} = \text{mTORC2 involved in phosphorylation of PKB at threonine 473} \\
    & X_{34} = \text{AS160} \\
    & X_{35} = \text{AS160 phosphorylated at threonine 642} \\
    & X_{36} = \text{S6K} \\
    & X_{37} = \text{Activated S6K phosphorylated at threonine 389} \\
    & X_{38} = \text{S6} \\
    & X_{39} = \text{Activated S6 phosphorylated at serine 235 and serine 236} \\
    & X_{40} = \text{ERK} \\
    & X_{41} = \text{ERK phosphorylated at threonine 202 and tyrosine 204} \\
    & X_{42} = \text{ERK sequestered in an inactive pool} \\
    & X_{43} = \text{Elk1} \\
    & X_{44} = \text{Elk1 phosphorylated at serine 383}.
\end{align*}

Reactions:
\begin{multicols}{2}
\noindent
\begin{align*}
    & R_{1}: X_2 \rightarrow X_3 \\
    & R_{2}: X_2 \rightarrow X_4 \\
    & R_{3}: X_3 \rightarrow X_4 \\
    & R_{4}: X_4 \rightarrow X_7 \\
    & R_{5}: X_7 + X_{25} \rightarrow X_6 + X_{25} \\
    & R_{6}: X_4 \rightarrow X_2 \\
    & R_{7}: X_6 \rightarrow X_2 \\
    & R_{8}: X_7 + X_9 \rightarrow X_7 + X_{10} \\
    & R_{9}: X_9 \rightarrow X_{23} \\
    & R_{10}: X_{10} \rightarrow X_9 \\
    & R_{11}: X_{10} + X_{31} \rightarrow X_{22} + X_{31} \\
    & R_{12}: X_{22} \rightarrow X_{10} \\
    & R_{13}: X_{22} \rightarrow X_{23} \\
    & R_{14}: X_{23} \rightarrow X_9 \\
    & R_{15}: X_{10} + X_{24} \rightarrow X_{10} + X_{25} \\
    & R_{16}: X_{25} \rightarrow X_{24} \\
    & R_{17}: X_{10} + X_{26} \rightarrow X_{10} + X_{27} \\
    & R_{18}: X_{27} \rightarrow X_{26} \\
    & R_{19}: X_{27} + X_{33} \rightarrow X_{29} + X_{33} \\
    & R_{20}: X_{22} + X_{28} \rightarrow X_{22} + X_{29} \\
    & R_{21}: X_{29} \rightarrow X_{28} \\
    & R_{22}: X_{28} \rightarrow X_{26} \\
    & R_{23}: X_{29} + X_{30} \rightarrow X_{29} + X_{31} \\
    & R_{24}: X_{27} + X_{30} \rightarrow X_{27} + X_{31} \\
    & R_{25}: X_{31} \rightarrow X_{30} \\
    & R_{26}: X_7 + X_{32} \rightarrow X_7 + X_{33} \\
    & R_{27}: X_{33} \rightarrow X_{32} \\
    & R_{28}: X_{29} + X_{34} \rightarrow X_{29} + X_{35} \\
    & R_{29}: X_{28} + X_{34} \rightarrow X_{28} + X_{35} \\
    & R_{30}: X_{35} \rightarrow X_{34} \\
    & R_{31}: X_{35} + X_{20} \rightarrow X_{35} + X_{21} \\
    & R_{32}: X_{21} \rightarrow X_{20} \\
    & R_{33}: X_{31} + X_{36} \rightarrow X_{31} + X_{37} \\
    & R_{34}: X_{37} \rightarrow X_{36} \\
    & R_{35}: X_{37} + X_{38} \rightarrow X_{37} + X_{39} \\
    & R_{36}: X_{38} + X_{41} \rightarrow X_{39} + X_{41} \\
    & R_{37}: X_{39} \rightarrow X_{38} \\
    & R_{38}: X_7 + X_{40} \rightarrow X_7 + X_{41} \\
    & R_{39}: X_{22} + X_{40} \rightarrow X_{22} + X_{41} \\
    & R_{40}: X_{40} \rightarrow X_{41} \\
    & R_{41}: X_{41} \rightarrow X_{42} \\
    & R_{42}: X_{42} \rightarrow X_{40} \\
    & R_{43}: X_{41} + X_{43} \rightarrow X_{41} + X_{44} \\
    & R_{44}: X_{44} \rightarrow X_{43}
\end{align*}
\end{multicols}

\subsection{Wnt signaling}
\label{appendix:Wnt}
Species:
\begin{align*}
    &A_1= \text{destruction complex (DC) (active form)} \\
    &A_2=\text{DC (inactive form)} \\
    &A_3=\text{active DC residing in the nucleus} \\
    & A_4= \beta\text{-catenin} \\
    & A_5= \beta\text{-catenin in the nucleus} \\
    & A_6= \text{T-cell factor (TCF)} \\
    & A_7=  \beta\text{-catenin-TCF complex} \\
    & A_8= \beta\text{-catenin bound with DC} \\
    & A_9= \beta\text{-catenin bound with DC in the nucleus} \\
    & A_{10}=  \beta\text{-catenin (for proteasomal degradation)} \\
    & A_{11}=  \beta\text{-catenin (for proteasomal degradation) in the nucleus} \\
    & A_{12}=   \text{dishevelled (inactive form)} \\
    & A_{13}=  \text{dishevelled (active form)} \\
    &A_{14}=  \text{active dishevelled in the nucleus} \\
    & A_{15}=   \text{inactive DC in the nucleus} \\
    & A_{16}=  \text{phosphatase} \\
    & A_{17}=  \text{phosphatase in the nucleus} \\
    & A_{18}=  \text{active DC bound with dishevelled} \\
    & A_{19}=  \text{active DC bound with dishevelled in the nucleus} \\
    & A_{20}= \text{active DC bound with phosphatase} \\
    & A_{21}=  \text{active DC bound with phosphatase in the nucleus} \\
    & A_{22}=  \text{GSK3}\beta \\
    & A_{23}=  \text{axin-APC complex} \\
    & A_{24}=  \text{APC} \\
    & A_{25}=  \beta\text{-catenin bound with DC (for proteasomal degradation)} \\
    & A_{26}=  \text{axin} \\
    & A_{27}=  \beta\text{-catenin-axin complex} \\
    & A_{28}=  \text{a complex considered as a single species}:
    (A_{13}+A_{22}+A_{23}=A_{28})
\end{align*}
Reactions of Feinberg-Lee Wnt Signaling:
\begin{multicols}{2}
\noindent
\begin{align*}
& R_1: 0 \rightarrow A_4 \\
& R_{2}: A_4 \rightarrow 0 \\
& R_3: A_1 + A_4 \rightarrow A_8 \\
& R_4: A_8 \rightarrow A_1  +  A_4 \\
& R_{5}: A_{10} \rightarrow 0 \\
& R_{6}: A_1 \rightarrow A_2 \\
& R_{7}: A_2 \rightarrow A_1 \\
& R_{8}: A_{12} \rightarrow A_{13} \\
& R_{9}: A_{13} \rightarrow A_{12} \\
& R_{10}: A_{24} + A_{26} \rightarrow A_{23} \\
& R_{11}: A_{23} \rightarrow A_{24} + A_{26} \\
& R_{12}: A_8 \rightarrow A_{25} \\
& R_{13}: A_{25} \rightarrow A_1 + A_{10} \\
& R_{14}: 0 \rightarrow A_{26} \\
& R_{15}: A_{26} \rightarrow 0 \\
& R_{16}: A_4 + A_6 \rightarrow A_7 \\
& R_{17}: A_7 \rightarrow A_4 + A_6 \\
& R_{18}: A_{24} + A_4 \rightarrow A_{27} \\
& R_{19}: A_{27} \rightarrow A_{24} + A_4 \\
& R_{20}: A_{13} + A_2 \rightarrow A_{28} \\
& R_{21}: A_2 \rightarrow A_{23} \\
& R_{22}: A_{23} \rightarrow A_2 \\
& R_{23}: A_{28} \rightarrow A_{13} + A_{23}
\end{align*}
\end{multicols}
Reactions of Schmitz Wnt Signaling:
\begin{multicols}{2}
\noindent
\begin{align*}
& R_1: 0 \rightarrow A_4 \\
& R_2: A_4 \rightarrow A_5 \\
& R_3: A_5 \rightarrow A_4 \\
& R_4: A_1 + A_4 \rightarrow A_8 \\
& R_5: A_8 \rightarrow A_1  +  A_4 \\
& R_6: A_5 + A_3 \rightarrow A_9 \\
& R_7: A_9 \rightarrow A_5  +  A_3 \\
& R_8: A_6 + A_5 \rightarrow A_7 \\
& R_9: A_7 \rightarrow A_6  +  A_5 \\
& R_{10}: A_8 \rightarrow A_1 + A_{10} \\
& R_{11}: A_9 \rightarrow A_3 + A_{11} \\
& R_{12}: A_{10} \rightarrow 0 \\
& R_{13}: A_{11} \rightarrow 0 \\
& R_{14}: A_1 \rightarrow A_2 \\
& R_{15}: A_2 \rightarrow A_1 \\
& R_{16}: A_1 \rightarrow A_3 \\
& R_{17}: A_3 \rightarrow A_1
\end{align*}
\end{multicols}
Reactions of MacLean Wnt Signaling:
\begin{multicols}{2}
\noindent
\begin{align*}
& R_1: 0 \rightarrow A_4 \\
& R_2: A_4 \rightarrow A_5 \\
& R_3: A_5 \rightarrow A_4 \\
& R_4: A_1 + A_4 \rightarrow A_8 \\
& R_5: A_8 \rightarrow A_1  +  A_4 \\
& R_6: A_5 + A_3 \rightarrow A_9 \\
& R_7: A_9 \rightarrow A_5  +  A_3 \\
& R_8: A_6 + A_5 \rightarrow A_7 \\
& R_9: A_7 \rightarrow A_6  +  A_5 \\
& R_{10}: A_{12} \rightarrow A_{13} \\
& R_{11}: A_{13} \rightarrow A_{12} \\
& R_{12}: A_{13} \rightarrow A_{14} \\
& R_{13}: A_{14} \rightarrow A_{13} \\
& R_{14}: A_2 \rightarrow A_{15} \\
& R_{15}: A_{15} \rightarrow A_2 \\
& R_{16}: A_3  +  A_{14} \rightarrow A_{19} \\
& R_{17}: A_{19} \rightarrow A_3  +  A_{14} \\
& R_{18}: A_{19} \rightarrow A_{14}  +  A_{15} \\
& R_{19}: A_{15}  +  A_{17} \rightarrow A_{21} \\
& R_{20}: A_{21} \rightarrow A_{15}  +  A_{17} \\
& R_{21}: A_{21} \rightarrow A_3  +  A_{17} \\
& R_{22}: A_{13}  +  A_1 \rightarrow A_{18} \\
& R_{23}: A_{18} \rightarrow A_{13}  + A_1 \\
& R_{24}: A_{18} \rightarrow A_{13}  + A_2 \\
& R_{25}: A_2  +  A_{16} \rightarrow A_{20} \\
& R_{26}: A_{20} \rightarrow A_2  +  A_{16} \\
& R_{27}: A_{20} \rightarrow A_1  +  A_{16} \\
& R_{28}: A_8 \rightarrow A_1 \\
& R_{29}: A_9 \rightarrow A_3 \\
& R_{30}: A_4 \rightarrow 0 \\
& R_{31}: A_5 \rightarrow 0
\end{align*}
\end{multicols}

\subsection{Reactions of the Mycobacterium tuberculosis NRP S-system}
\label{appendix:tb:nrp}
\begin{align*}
&R_{1}:  X_2  \to X_2 + X_1\\
&R_{2}:  X_1 \to 0\\
&R_{3}:  X_1  \to X_1 + X_2\\
&R_{4}:  X_2 \to 0\\
&R_{5}:  X_1 + X_{13} + X_7 + X_6 + X_4 + X_{16} + X_{11} + X_{38}  \to X_1 + X_{13} + X_7 + X_6 + X_4\\ 
& \quad + X_{16} + X_{11} + X_{38} + X_3\\
&R_{6}:  X_3 \to 0\\
&R_{7}:  X_3 + X_{11} + X_6 + X_{12} + X_9 + X_5 + X_{10}  \to X_3 + X_{11} + X_6 + X_{12} + X_9 + X_5 \\ 
& \quad + X_{10} + X_4\\
&R_{8}:  X_4 \to 0\\
&R_{9}:  X_4 + X_6 + X_9 + X_{11} + X_{12}  \to X_4 + X_6 + X_9 + X_{11} + X_{12} + X_5\\
&R_{10}: X_5 \to 0\\
&R_{11}: X_4 + X_5 + X_9 + X_3 + X_{11} + X_{12} + X_{10}  \to X_4 + X_5 + X_9 + X_3 + X_{11} + X_{12}\\ 
& \quad  + X_{10} + X_6\\
&R_{12}: X_6 \to 0\\
&R_{13}: X_3 + X_8 + X_{40}  \to X_3 + X_8 + X_{40} + X_7\\
&R_{14}: X_7 \to 0\\
&R_{15}: X_1 + X_7 + X_{21} + X_{22} + X_{24} + X_{15}  \to X_1 + X_7 + X_{21} + X_{22} + X_{24} + X_{15} + X_8\\
&R_{16}: X_8 \to 0\\
&R_{17}: X_5 + X_4 + X_6 + X_{10} + X_{11} + X_{12}  \to X_5 + X_4 + X_6 + X_{10} + X_{11} + X_{12} + X_9\\
&R_{18}: X_9 \to 0\\
&R_{19}: X_4 + X_6 + X_9  \to X_4 + X_6 + X_9 + X_{10}\\
&R_{20}: X_{10}  \to 0\\
&R_{21}: X_3 + X_4 + X_5 + X_6 + X_9 + X_{12} + X_{17} + X_{13}  \to X_3 + X_4 + X_5 + X_6 + X_9 \\ 
& \quad + X_{12} + X_{17} + X_{13} + X_{11}\\
&R_{22}: X_{11}  \to 0\\
&R_{23}: X_4 + X_5 + X_6 + X_9 + X_{11} + X_{17} + X_{13}  \to  X_4 + X_5 + X_6 + X_9 + X_{11} + X_{17}\\ 
& \quad  + X_{13} + X_{12}\\
&R_{24}: X_{12}  \to 0\\
&R_{25}: X_3 + X_1 + X_{11} + X_{18} + X_{16} + X_{36}  \to X_3 + X_1 + X_{11} + X_{18} + X_{16} + X_{36} + X_{13}\\
&R_{26}: X_{13}  \to 0\\
&R_{27}: X_1 + X_2 + X_{15}  \to X_1 + X_2 + X_{15} + X_{14}\\
&R_{28}: X_{14}  \to 0\\
&R_{29}: X_1 + X_8 + X_{14}  \to X_1 + X_8 + X_{14} + X_{15}\\
&R_{30}: X_{15}  \to 0\\
&R_{31}: X_1 + X_3 + X_{13}  \to X_1 + X_3 + X_{13} + X_{16}\\
&R_{32}: X_{16}  \to 0\\
&R_{33}: X_{11} + X_{12}  \to X_{11} + X_{12} + X_{17}\\
&R_{34}: X_{17}  \to 0\\
&R_{35}: X_{13} + X_{36}  \to X_{13} + X_{36} + X_{18}\\
&R_{36}: X_{18}  \to 0\\
&R_{37}: X_7  \to X_7 + X_{19}\\
&R_{38}: X_{19}  \to 0\\
&R_{39}: X_7  \to X_7 + X_{20}\\
&R_{40}: X_{20}  \to 0\\
&R_{41}: X_8  \to X_8 + X_{21}\\
&R_{42}: X_{21}  \to 0\\
&R_{43}: X_1 + X_8  \to X_1 + X_8 + X_{22}\\
&R_{44}: X_{22}  \to 0 \\
&R_{45}: X_1  \to X_1 + X_{23}\\
&R_{46}: X_{23}  \to 0\\
&R_{47}: X_1 + X_8  \to X_1 + X_8 + X_{24}\\
&R_{48}: X_{24}  \to 0\\
&R_{49}: X_1  \to X_1 + X_{25}\\
&R_{50}: X_{25}  \to 0\\
&R_{51}: X_1  \to X_1 + X_{26}\\
&R_{52}: X_{26}  \to 0\\
&R_{53}: X_1 + X_2  \to X_1 + X_2 + X_{27}\\
&R_{54}: X_{27}  \to 0\\
&R_{55}: X_1 + X_2  \to X_1 + X_2 + X_{28}\\
&R_{56}: X_{28}  \to 0\\
&R_{57}: X_2  \to X_2 + X_{29}\\
&R_{58}: X_{29}  \to 0\\
&R_{59}: X_2  \to X_2 + X_{30}\\
&R_{60}: X_{30}  \to 0\\
&R_{61}: X_2  \to X_2 + X_{31}\\
&R_{62}: X_{31}  \to 0\\
&R_{63}: X_2  \to X_2 + X_{32}\\
&R_{64}: X_{32}  \to 0\\
&R_{65}: X_2  \to X_2 + X_{33}\\
&R_{66}: X_{33}  \to 0\\
&R_{67}: X_2  \to X_2 + X_{34}\\
&R_{68}: X_{34}  \to 0\\
&R_{69}: X_2  \to X_2 + X_{35}\\
&R_{70}: X_{35}  \to 0\\
&R_{71}: X_1 + X_{13} + X_{18}  \to X_1 + X_{13} + X_{18} + X_{36}\\
&R_{72}: X_{36}  \to 0\\
&R_{73}: X_2  \to X_2 + X_{37}\\
&R_{74}: X_{37}  \to 0\\
&R_{75}: X_3  \to X_3 + X_{38}\\
&R_{76}: X_{38}  \to 0\\
&R_{77}: X_4  \to X_4 + X_{39}\\
&R_{78}: X_{39}  \to 0\\
&R_{79}: X_1 + X_7  \to X_1 + X_7 + X_{40}\\
&R_{80}: X_{40}  \to 0
\end{align*}

\subsection{Reactions of the model of a complex coordination of multi-scale cellular responses to environmental stress}
\label{appendix:complex:coordination}
\begin{multicols}{2}
\noindent
\begin{align*}
&R_1: X_1+X_3 \to X_2+X_3\\
&R_2: X_2+X_7 \to X_3+X_7\\
&R_3: X_3 \to X_4\\
&R_4: X_4 \to X_3\\
&R_5: X_4 \to X_5\\
&R_6: X_5 \to X_4\\
&R_7: X_5+X_3 \to X_6+X_3\\
&R_8: X_6+X_5+X_3 \to X_4 + X_5 + X_3\\
&R_9: X_5+X_3+X_2 \to X_7 + X_2\\
&R_{10}: X_7\to X_8\\
&R_{11}: X_8\to X_2\\
&R_{12}: X_3\to X_9\\
&R_{13}: X_9\to X_{10}\\
&R_{14}: X_9\to X_{11}\\
&R_{15}: X_3\to X_{12}
\end{align*}
\end{multicols}

\section{The computational package DECENT}
\label{DECENT:Anderies}

\subsection{Instruction}

\begin{spverbatim}
To fill out the `model' structure, write a string for `model.id'. This is just to put a name to the network. To add the reactions to the network, use the function addReaction where the output is `model'. 

The addReaction function
      - OUTPUT: Returns a structure called `model' with added field `reaction' with subfields `id', `reactant', `product', `reversible', and `kinetic'. The output variable `model' allows the user to view the network with the added reaction.
      - INPUTS:
           - model: a structure, representing the CRN
           - id: visual representation of the reaction, e.g.,
             reactant -> product (string)
           - reactant_species: species of the reactant complex (cell)
           - reactant_stoichiometry: stoichiometry of the species of the
             reactant complex (cell)
           - reactant_kinetic: kinetic orders of the species of the
             reactant complex (array)
           - product_species: species of the product complex (cell)
           - product_stoichiometry: stoichiometry of the species of the
             product complex (cell)
           - product_kinetic: "kinetic orders" of the species of the product
             complex, if the reaction is reversible (array); if the reaction
             in NOT reversible, leave blank
           - reversible: logical; whether the reaction is reversible or not
             (true or false)
      * Make sure the function addReaction is in the same folder/path being
        used as the current working directory.
\end{spverbatim}

\subsection{A script for the Anderies et al. network}
We provide a sample script as follows. To compute the FIID, just replace the FID by FIID in the last line of the script.
\begin{verbatim}
% Input the chemical reaction network 
model.id = `Example 1';
model = addReaction(model, `A1+2A2->2A1+A2', ...           
                           {`A1',`A2'}, {1,2}, [1,2], ... 
                           {`A1',`A2'}, {2,1}, [ ], ...   
                           false);                        
model = addReaction(model, `A1+A2->2A2', ...            
                           {`A1',`A2'}, {1,1}, [1,1], ...
                           {`A2'}, {2}, [ ], ...
                           false);  
model = addReaction(model, `A2<->A3', ...
                           {`A2'}, {1}, [1], ...
                           {`A3'}, {1}, [1], ...
                           true);
% Generate the finest incidence independent decomposition
[model, I_a, G, P, results] = FID_check(model);
\end{verbatim}

\subsection{Output}
\begin{verbatim}
The finest independent decomposition is:

N1: R1, R2 
N2: R3, R4 

Deficiencies:

      Characteristic      Notation    Network    |    N1    N2    Sum
    ___________________    ________    _______    _    __    __    ___

    Deficiency             delta          1       |    1     0      1 
    
The finest incidence independent decomposition is:

N1: R1 
N2: R2 
N3: R3, R4 

Deficiencies:

      Characteristic      Notation    Network    |    N1    N2    N3    Sum
    ___________________    ________    _______    _    __    __    __    ___

    Deficiency             delta          1       |    0     0     0      0 
\end{verbatim}

\section{Other matrix theory construction for deficiency zero networks} \label{app:mattheo}

For the succeeding results, let $M_{m,n}$ be the set of all $m\times n$ matrices with complex entries and $I_n$ be the $n\times n$ identity matrix. Let $A\in M_{m,n}$. We denote by $A^\top\in M_{n,m}$ the transpose of $A$.
\begin{lemma}\label{prop:eqrank}
    Let $N\in M_{m,r}$, $Y\in M_{m,n}$ and $W\in M_{n,r}$ be the stoichiometric, molecularity and incidence matrices of a CRN, respectively. Then, \[\textup{rank }N\leq\textup{rank }W.\] Moreover, if $Y$ has a left generalized inverse (i.e., $Y'Y=I_n$ for some $Y'\in M_{n,m}$), then \[\textup{rank }N=\textup{rank }W.\]
\end{lemma}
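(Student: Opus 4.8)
The plan is to use the fundamental factorization of the stoichiometric matrix through the molecularity and incidence matrices. As recalled in the abstract, $N = Y \cdot I_a$, and since $W$ denotes the incidence matrix here, this reads $N = YW$. Everything in the lemma follows from this single identity together with elementary rank bounds for matrix products, so I would open the proof by stating $N = YW$ explicitly and fixing the dimensional bookkeeping: $Y \in M_{m,n}$, $W \in M_{n,r}$, so $YW \in M_{m,r}$, matching the shape of $N$.

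First I would establish the inequality. Because the rank of a product never exceeds the rank of either factor, $\textup{rank } N = \textup{rank}(YW) \le \textup{rank } W$. This disposes of the first assertion with no further computation.

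For the equality under the left-inverse hypothesis, the key observation is that a left inverse of $Y$ lets us recover $W$ from $N$. Left-multiplying $N = YW$ by $Y'$ and using $Y'Y = I_n$ gives $Y'N = Y'YW = W$. Applying the same product-rank bound to this expression yields $\textup{rank } W = \textup{rank}(Y'N) \le \textup{rank } N$. Combining this with the inequality from the first part forces $\textup{rank } N = \textup{rank } W$, as claimed.

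I do not anticipate a genuine obstacle: the argument rests entirely on the factorization $N = YW$ and on the fact that left-multiplication by a left-invertible matrix cannot decrease rank. The only points meriting a sentence of care are the dimensional compatibility — namely that $Y' \in M_{n,m}$ so that $Y'N \in M_{n,r}$ has exactly the shape of $W$ — and the remark that $Y'Y = I_n$ presupposes $Y$ has full column rank $n$, which is precisely what makes the recovery $W = Y'N$ possible. Since no nontrivial linear algebra beyond these standard facts is required, the proof should be short and self-contained.
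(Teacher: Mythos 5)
Your proof is correct and follows essentially the same route as the paper's: both rest on the factorization $N=YW$, the product rank bound for the inequality, and the recovery $W=Y'N$ (equivalently the chain $\textup{rank}\,W=\textup{rank}\,Y'YW\le\textup{rank}\,YW\le\textup{rank}\,W$) for the equality. No gaps.
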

\begin{proof}
    Recall that $N=YW$. Hence, ${\rm{rank \ }} N \le \min\{{\rm{rank \ }} Y,{\rm{rank \ }} W\} \le {\rm{rank \ }} W.$ The second assertion follows from the following:
    \[\textup{rank }I=\textup{rank }Y'YI\leq\textup{rank }YI\leq\textup{rank }I.\] \qed
\end{proof}

\begin{proposition} \cite{tan} \label{prop:tan} Let $A\in M_{m,n}$. Then, $A$ has a left generalized inverse $A'$ if and only if $\textup{rank }A=n$ and $n\leq m$.
\end{proposition}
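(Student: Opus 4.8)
The plan is to prove the two implications of the stated equivalence separately: the forward direction by elementary rank counting, and the reverse direction by a basis-completion construction that avoids any reliance on Gram-type matrices.

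For the forward direction, suppose $A$ has a left generalized inverse $A'\in M_{n,m}$, so that $A'A=I_n$. The key fact is submultiplicativity of rank: $\textup{rank }(A'A)\le\min\{\textup{rank }A',\textup{rank }A\}$. Since $A'A=I_n$ has rank $n$, this forces $n\le\textup{rank }A$. But $A\in M_{m,n}$ gives $\textup{rank }A\le\min\{m,n\}$, so we simultaneously conclude $\textup{rank }A=n$ and $n\le m$. This direction is routine and parallels the bookkeeping already used in the proof of Lemma \ref{prop:eqrank}.

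For the reverse direction, suppose $\textup{rank }A=n$ and $n\le m$. Then the $n$ columns of $A$ are linearly independent in $\mathbb{C}^m$, so they span an $n$-dimensional subspace that extends to a basis of $\mathbb{C}^m$. Concretely, I would append $m-n$ vectors to obtain an invertible matrix $B=[\,A\mid C\,]\in M_{m,m}$ with $C\in M_{m,m-n}$. Since $B$ is invertible, $B^{-1}$ exists, and the identity $B^{-1}B=I_m$ gives $B^{-1}A=\left(\begin{smallmatrix}I_n\\0\end{smallmatrix}\right)$, the first $n$ columns of the $m\times m$ identity. Partitioning $B^{-1}=\left(\begin{smallmatrix}P\\Q\end{smallmatrix}\right)$ with $P\in M_{n,m}$ consisting of its first $n$ rows, this reads $PA=I_n$, so $A':=P$ is the desired left generalized inverse.

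The step requiring the most care is the choice of construction in the reverse direction. Over $\mathbb{C}$ the naive candidate $A'=(A^\top A)^{-1}A^\top$ fails, because $A^\top A$ can be singular even when $A$ has full column rank: for instance, a column equal to $(1,i)^\top$ yields $A^\top A=0$. This is exactly where the familiar real-case Gram-matrix argument breaks down once one uses the ordinary transpose rather than the conjugate transpose. The basis-completion approach sidesteps this entirely, relying only on the existence of a basis extension and on invertibility of $B$, and hence is valid over any field. Everything else is straightforward linear algebra.
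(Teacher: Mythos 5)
Your proof is correct. Note that the paper does not actually prove Proposition \ref{prop:tan} --- it is imported by citation from Tan's textbook --- so there is no internal argument to compare against, and your write-up in effect supplies the missing proof. Both directions are sound: the forward direction is the standard rank-submultiplicativity bookkeeping (the same device the paper uses in Lemma \ref{prop:eqrank}), and the reverse direction's basis-completion construction --- extend the $n$ independent columns of $A$ to an invertible $B=[\,A\mid C\,]$ and read off the first $n$ rows of $B^{-1}$ --- is valid over any field. Your cautionary remark about $(A^{\top}A)^{-1}A^{\top}$ is well taken: since the paper defines $M_{m,n}$ over $\mathbb{C}$ and works with the ordinary transpose rather than the conjugate transpose, $A^{\top}A$ can be singular even when $A$ has full column rank (your example with the column $(1,i)^{\top}$ is exactly right), so the Gram-matrix shortcut does not establish the proposition at its stated level of generality. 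That subtlety is harmless where the paper actually invokes $Y^{\top}Y$ --- in the discussion preceding Corollary \ref{cor:suffcon} the matrix $Y$ is a molecularity matrix with non-negative integer entries, so $Y^{\top}Y$ is a genuine real Gram matrix --- but your basis-completion route is the right way to prove the proposition as stated for arbitrary complex matrices.
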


\begin{theorem} \label{thm:suffcon}
Let $n$ be the number of complexes and $m$ be the number of species in a CRN. Let $N\in M_{m,r},Y\in M_{m,n},W\in M_{n,r}$ be the stoichiometric, molecularity, and incidence matrices of the CRN, respectively. If $\textup{rank }Y=n$ and $n\leq m$, then the CRN is an FDE network. Consequently, if $Y^\top Y$ is invertible, then the CRN is an FDE network. 
\end{theorem}
\begin{proof}
    It follows from Proposition \ref{prop:tan} that $Y$ has a left generalized inverse $Y'\in M_{n,m}$ and so, $Y'Y=I_n$. Suppose that $\textup{rank }W=p$. Note that $\textup{rank }N=\textup{rank }YW=\textup{rank }W$ by Lemma \ref{prop:eqrank}. Recall that each column $n_j$ and $k_j$ of the stoichiometric and incidence matrices, respectively, corresponds to the same reaction vector $R_j$ of the CRN. Let $\{n_{i_1},...,n_{i_p}\}$ be a linearly independent set of columns of $N$, which is maximal since $\textup{rank }N=p$. Since \[N=[n_1 \hdots n_r]=[Yw_1\hdots Yw_r]=YW,\] we have that $\{Yw_{i_1},...,Yw_{i_p}\}$ is also a maximal linearly independent set. Thus, $a_{i_1}Yw_{i_1}+\hdots+a_{i_r}Yw_{i_p}=0$ if and only if $a_{i_1}=\hdots=a_{i_p}=0$. Since $Y'Y=I_n$, we have that $a_{i_1}w_{i_1}+\hdots+a_{i_r}w_{i_p}=0$ if and only if $a_{i_1}=\hdots=a_{i_p}=0$. Therefore, $\{w_{i_1},...,w_{i_p}\}$ is a linearly independent set of columns of $W$, which is also maximal since $\textup{rank }W=p$. For each vector $n_k$ distinct from $\{n_{i_1},...,n_{i_p}\}$, we may write $n_k=\sum\limits_{j=1}^p\alpha_{k_j}n_{i_j}$. Since $n_k=Yw_k$ for all $k\in\{1,...,r\}$,
\begin{equation}\label{lincomb}
    Yw_k=n_k=\sum\limits_{j=1}^p\alpha_{k_j}n_{i_j}=\sum\limits_{j=1}^p\alpha_{k_j}Yw_{i_j}.
\end{equation}

 Multiplying $Y'$ on the left side of \eqref{lincomb} gives 
\[
 w_k=\sum\limits_{j=1}^p\alpha_{k_j}w_{i_j}.
\]
Therefore, $n_k=\sum\limits_{j=1}^p\alpha_{k_j}n_{i_j}$ if and only if $w_k=\sum\limits_{j=1}^p\alpha_{k_j}w_{i_j}$ for all $n_k$ distinct from  $\{n_{i_1},...,n_{i_p}\}$ and all $w_k$ distinct from $\{w_{i_1},...,w_{i_p}\}$. Hence, by the five step method in finding independent (incidence independent) decompositions of CRNs in \cite{hernandez:delacruz1}, the columns of $N$ and $W$ forms the same coordinate graph in which it follows that the same corresponding CRN decompositions hold. Thus, $\textup{FID}=\textup{FIID}$. 

Suppose that $Y^\top Y$ is invertible and let $Y'=(Y^\top Y)Y^\top$. Then, \[Y'Y=(Y^\top Y)^{-1}Y^\top Y= I_n.\] Hence, the invertibility of $Y^\top Y$ is sufficient for the existence of a left generalized inverse $Y'$ of $Y$. By Proposition \ref{prop:tan}, $\text{rank }Y=n$ and $n\leq m$. By the preceding, $\text{FID}=\text{FIID}$.
\qed
\end{proof}

\begin{example}
    Consider the following CRN: $R_1: A\to B$, $R_2:B\to A$, and $R_3:C+D\to D+E$. The corresponding molecularity matrix is given by \[
    Y=\begin{blockarray}{ccccc}
        \matindex{$A$} & \matindex{$B$} & \matindex{$C+D$} & \matindex{$D+E$} \\
        \begin{block}{[cccc]c}
        1 & 0 & 0 & 0  & \matindex{$A$}\\
        0 & 1 & 0 & 0 & \matindex{$B$}\\
        0 & 0 & 1 & 0 & \matindex{$C$}\\
        0 & 0 & 1 & 1 & \matindex{$D$}\\
        0 & 0 & 0 & 1 & \matindex{$E$}\\
        \end{block}
        \end{blockarray}.
    \]
    Note that $\textup{rank }Y=4$, which is the number of complexes. Furthermore, the number of complexes is less than the number of species. By Proposition \ref{prop:tan}, a left generalized inverse of $Y$ exists and is given by
    \[
    Y'=\begin{bmatrix}
        1 & 0 & 0 & 0 & 0\\
        0 & 1 & 0 & 0 & 0\\
        0 & 0 & \frac{2}{3} & \frac{1}{3} & -\frac{1}{3}\\
        0 & 0 & -\frac{1}{3} & \frac{1}{3} & \frac{2}{3}
    \end{bmatrix}.
    \]
    The corresponding incidence and stoichiometric matrices, respectively, are as follows:
    \begin{center}
        $W=\begin{blockarray}{cccc}
            \matindex{$R_1$} & \matindex{$R_2$} & \matindex{$R_3$}\\
            \begin{block}{[ccc]c}
                -1 & 1 & 0 & \matindex{$A$}\\
                1 & -1 & 0 & \matindex{$B$}\\
                0 & 0 & -1 & \matindex{$C+D$}\\
                0 & 0 & 1 & \matindex{$D+E$}\\
            \end{block}
        \end{blockarray}$ and $N=\begin{blockarray}{cccc}
            \matindex{$R_1$} & \matindex{$R_2$} & \matindex{$R_3$}\\
            \begin{block}{[ccc]c}
            -1 & 1 & 0 & \matindex{$A$}\\
            1 & -1 & 0 & \matindex{$B$}\\
            0 & 0 & -1 & \matindex{$C$}\\
            0 & 0 & 0 & \matindex{$D$}\\
            0 & 0 & 1 & \matindex{$E$}\\
            \end{block}
        \end{blockarray}$.
    \end{center}
    Observe that $\textup{rank }W=\textup{rank }N=2$ with $\{n_1,n_3\}$ and $\{w_1,w_3\}$ as a maximal linearly independent set of columns for $N$ and $W$, respectively. Moreover, $w_2=-w_1$ and $n_2=-n_1$. Hence, following the five-step method in finding independent (incidence independent) decompositions of CRNs in \cite{hernandez:delacruz1}, we see that the FID consists of $\{R_1,R_2\}$ and $\{R_3\}$ while the FIID consists of $\{R_1,R_2\}$ and $\{R_3\}$. Therefore, FID=FIID.
    This CRN also verifies the assertion in Theorem \ref{thm:suffcon}. A computation yields \[Y^\top Y=\begin{bmatrix}
        1 & 0 & 0 & 0\\
        0 & 1 & 0 & 0\\
        0 & 0 & 2 & 1\\
        0 & 0 & 1 & 2\\
    \end{bmatrix},\] which is invertible. Hence, Theorem \ref{thm:suffcon} implies $\textup{FID}=\textup{FIID}$.
\end{example}

\begin{example}
    Consider the translated Anderies CRN given by Example \ref{example:Anderies2.1}. Observe that the molecularity matrix is $Y=I_3$ and hence, $\textup{rank }Y=3$, which is the number of complexes. Moreover, the number of complexes is equal to the number of species. By Theorem \ref{thm:suffcon}, $\textup{FID}=\textup{FIID}$ which verifies the conclusion from Example \ref{example:Anderies2.1}.
\end{example}

\begin{example}
    A similar example from the translated Anderies CRN is given by the Schmitz's pre-industrial carbon cycle model. The molecularity matrix is $Y=I_6$ and so, we have that $\textup{FID}=\textup{FIID}$ by Theorem \ref{thm:suffcon}.
\end{example}

Theorem \ref{thm:suffcon} result generalizes the case of monomolecular networks.

\bibliographystyle{spmpsci}

\end{document}